\newtheorem{theorem}{Theorem}
\newtheorem{lemma}{Lemma}
\newcounter{mycount}
\newcommand{\be}[1]{ \begin{equation} \begin{aligned}{#1} }    
\newcommand{\ee}{\end{aligned} \end{equation}}
\begin{document}

\title{Composite fermion basis for two-component Bose gases}
\author{O. Liab\o tr\o}
\author{M. L. Meyer}
\affiliation{Department of Physics, University of Oslo, P.O. Box 1048 Blindern,
0316 Oslo, Norway}

\date{\today}

\begin{abstract}
Despite its success, the composite fermion (CF) construction possesses some
mathematical features that have, until recently, not been fully understood. In
particular, it is known to produce wave functions that
are not necessarily orthogonal, or even linearly independent, after projection to the lowest Landau level.
While this is usually not a problem in practice in the quantum Hall regime, we have
previously shown that it presents a technical challenge for rotating Bose gases
with low angular momentum. These are systems where the CF approach yields
surprisingly good approximations to the exact eigenstates of weak short-range
interactions, and so solving the problem of linearly dependent wave functions is
of interest. It can also be useful for studying higher bands of fermionic quantum Hall states.
Here we present several ways of constructing a basis for the space of so-called ``simple''
CF states for two-component rotating Bose gases in the lowest Landau level,
and prove that they all give sets of linearly independent wave functions that span the space. Using this basis, we study the structure of
the lowest-lying state using so-called restricted wave functions. We also
examine the scaling of the overlap between the exact and CF wave functions at
the maximal possible angular momentum for simple states.

\end{abstract}
\pacs{ }

\maketitle


\section{Introduction}
\label{sec:intro}
Almost 20 years ago, the connection between the physics of charged particles
moving in two dimensions in a strong magnetic field, and dilute cold atoms
rotating rapidly in a harmonic trap, was noticed \cite{wilkin-gunn-smith98}.
Since then, a large body of theoretical and experimental work has accumulated
that explore the various aspects of rapidly rotating atomic gases and the
associated quantum phenomena; for reviews, see e.g. \cite{viefers-review, cooper-review, fetter-review}. In particular, one
expects strongly correlated phases similar to those found in the quantum Hall
effect when the atoms experience strong synthetic magnetic fields. The effect of
a synthetic magnetic field can be generated by simply rotating the cloud, or by
more advanced techniques \cite{cornell04, roncaglia11, gemelke10, lin09, dalibardreview}.

For electron systems, one prominent way of theoretically studying the quantum
Hall effect involves constructing (classes of) explicit trial wave functions
that approximate the true low-energy eigenstates of the interacting system. At
least in the case of Coulomb interaction, the true many-body eigenstates are
extremely complicated. Still, many successful trial wave functions exist, most
famously the Laughlin wave functions \cite{laughlin83}, the family of composite fermion (CF)
states \cite{jain89}, and trial wave functions addressing non-Abelian quantum Hall
states \cite{moore91, read99, ardonne-schoutens99}. The success of these trial wave functions in explaining various
phenomena is linked to the way they capture the important topological
properties of the phases they describe.

Many of the methods mentioned above have been modified to be applicable to
weakly interacting cold atom systems \cite{cooper-wilkin99, viefers00}. The hope is to be able to experimentally
study strongly correlated states in a cold atom setting, where parameters like
density, disorder and scattering lengths may be tuned much more finely than in the
semiconductor systems traditionally used in quantum Hall experiments \cite{bloch08}. More
recently, an additional degree of freedom is often taken into consideration in models and experiments, called
pseudospin \cite{hall98, kasamatsu03}. Typically this means multicomponent mixtures where the different
components are different internal states of the atoms. Varying inter- and
intraspecies interactions independently allows for novel behaviours \cite{kasamatsu-tsubota-ueda05}.
This pseudospin degree of freedom has
been incorporated in the composite fermion scheme used for cold atom systems,
and has been used in the quantum Hall regime of high angular momenta \cite{jain13, grass14}. On the
other hand, near the lower end of the angular momentum scale, where the CF
description is not {\it a priori} expected to work, it has been shown that it
actually works surprisingly well, both in scalar and two-component cases \cite{korslund06, meyer14}.

A property of the CF method of constructing wave functions is that one typically
needs to do a projection into the lowest Landau level (LLL) in order to either
compare different CF wave functions, or to compare a CF to an eigenstate found
by numerical diagonalization of the interacting Hamiltonian \cite{jainbook}. This projection
leads to non-orthogonal, and often also linearly dependent, CF wave functions.
This has been known in the context of electrons in the quantum Hall regime \cite{dev-jain92, wu95, balram13}, but
the issue is much more prominent for bosons with low angular momentum: we have previously observed \cite{meyer-lia16} one or two orders of magnitude difference between the number of seemingly distinct CF
candidates and the actual number of linearly independent wave functions. In some
extreme cases the number of seemingly distinct CF candidates one can write down
is even larger than the dimension of the relevant sector of Hilbert space, meaning they cannot possibly be independent. Until
very recently, little was understood about the mechanisms responsible for these
linear dependencies after projection. In a previous paper \cite{meyer-lia16}, we discussed three
types of relations between certain types of low-lying CF candidates, but were
only able to give examples demonstrating how these relations seem to explain all
the linear dependencies.

In this paper, we present sets of CF candidates for two-component systems that we rigorously prove are
basis sets for the subspace that minimizes the CF
cyclotron energy for low angular momenta. We use these states to study the real-space distribution of particles and vortices of the lowest-lying
wave functions when we vary the angular momentum, and also give additional
attention to certain special cases, comparing to known behaviour from scalar
gases.


\section{Two-component rotating Bose gases}
\label{sec:theory}

We first summarize the model for two-species Bose gases in the lowest Landau
level, including their description in terms of composite fermions. A more
detailed introduction can be found in \cite{meyer14}. The two species of bosons
experience a two-dimensional harmonic trap potential of strength $\omega$, and
are rotating about the minimum of the potential at frequency $\Omega$. The
Hamiltonian is:
\be
H & = \sum_{i=1}^{N+M} \left(
\frac{\mathbf{p}_i^2}{2m}+\frac{1}{2}m\omega^2\mathbf{r}_i^2 - \Omega l_i
\right) \\
  & \ + \sum_{i=1}^{N+M} \sum_{j=i+1}^{N+M} 2\pi g_{i,j} \delta(\mathbf{r}_i - \mathbf{r}_j).  
\ee
Here $N$ denotes the number of particles of the minority species, and $M \geq N$
denotes the number of particles of the majority species, all with the same mass
$m$. $l_i$ is the angular momentum of particle $i$. The strength of the contact
interaction $g_{i,j}$ depends only on the species of particles $i, j$. In the
species-independent case $g_{i,j} = g =$ constant, the system posesses a
pseudospin-1/2 symmetry, which we will assume here. For sufficiently dilute
gases, i.e. in the weak interaction limit, this reduces to the well known lowest
Landau level problem \cite{viefers-review, cooper-review} in the effective magnetic field $2m\omega$,
\begin{align}
H &= \sum_{i}(\omega-\Omega)l_i + 2\pi g\sum_{i<j} \delta(\eta_i
- \eta_j).
\end{align}
In the ideal limit $(\omega - \Omega) \rightarrow 0$ the Landau levels become
flat, meaning that the many-body eigenstates are solely determined by the
interaction.
Here $\eta_j = x_j + i y_j$ are the dimensionless complex positions of the
particles in units of the ``magnetic'' length $\sqrt{\hbar/(2m\omega)}$. We name
the coordinates of the two components $z_i = \eta_i$, $1 \leq i \leq N$ and $w_i
= \eta_{N + i}$, $1 \leq i \leq M$.
Working in symmetric gauge, the single particle eigenstates in the lowest Landau
level with angular momentum $l$ are 
\begin{equation}
\psi_{0,l}(\eta) = N_{l} \eta^l \exp{(-\eta\bar{\eta}/4)} \qquad l\geq 0
\label{spwf}
\end{equation}
The Gaussian factors are ubiquitous, so we suppress them for simplicity from now
on. Since the Hamiltonian commutes with the total angular momentum $L = \sum_i
l_i$, we may work with many-body wave functions that are eigenstates of $L$.
These are homogeneous polynomials of degree $L$, symmetric in the coordinates of
each species separately. As is common \cite{viefers-review, meyer14} we will focus on
translationally invariant states, i.e. polynomials invariant under a constant
shift $\eta'$ of all coordinates,
\begin{equation}
\Psi(\{\eta_i + \eta' \}) = \Psi(\{\eta_i \}).
\end{equation}

As mentioned in the introduction, one may adapt the CF approach to produce wave
functions for 2D bosons, both in the scalar and multi-component cases. A CF
trial wave function for the bosonic two-species system is of the
form \cite{jainbook}
\begin{equation}
 \Psi_{\text{CF}}=\mathcal{P}_{\text{LLL}} \left(\Phi_Z \Phi_W J(z,w)^{q} \right)
 \label{2swf}
\end{equation}
where $q$ is an \emph{odd} number, $q=1,3,5,\ldots$. $\Phi_Z$, $\Phi_W$ are
Slater determinants for each species of CFs, treated as non-interacting to
lowest order. They consist of CF orbitals
\begin{equation}
\psi_{n,m}(\eta) = N_{n,m} \eta^m L_n^m\left( \frac{\eta \bar{\eta }}{2}
\right),
\quad m \geq -n 
\label{spwf2},
\end{equation}
where $L_n^m$ is the associated Laguerre polynomial, and $N_{n,m}$ is a
normalization factor. The interpretation of these orbitals is that they are
composite fermions occupying Landau-like levels, labeled by $n$ (often called
$\Lambda$-levels) in a reduced effective magnetic field.
$J$ is a Jastrow factor involving both species,
\be{}
J(z,w) &= \prod_{i<j}(\eta_i - \eta_j) \\	
 &= \prod_{i<j} (z_i-z_j) \prod_{k<l} (w_k-w_l) \prod_{i,k}(z_i-w_k).
\ee
$J(z,w)^q$ has $q$ units of angular momentum per pair of particles, in total
$L_J = q(N+M)(N+M-1)/2$. We will be considering low total angular momentum,
hence we will choose the smallest possibility $q=1$. $\mathcal{P}_{\text{LLL}} $
denotes projection to the lowest Landau level. We use the projection of Girvin
and Jach \cite{girvin-jach84} (called Method I in \cite{jainbook}), which amounts to
first moving the conjugate variables $\overline{\eta}_i$ to the left of the
$\eta_i$, and then replacing $\overline{\eta}_i$ by $\partial_{\eta_i}$.

In this paper, we focus on low angular momenta, specifically $L \leq M N$. We consider the set of translationally invariant CF wave
functions that minimize the total CF cyclotron energy $E_c \propto \sum_i n_i$ in this $L$ range. The sum runs over the CF orbitals occupied in a pair of Slater determinants. Since $L \leq M N < 1/2(N+M)(N+M-1) = L_J$, we see that $\sum_i m_i$ must be \emph{negative}. We have previously shown that this set is spanned by CF candidates where only the orbitals $\psi_{n, -n}$ are occupied, and that linear combinations of candidates in this set give good overlaps with the very lowest-lying states in the exact yrast spectrum \cite{meyer14}. After projection to the LLL, the orbitals $\psi_{n,-n}(\eta)$ become $\partial_{\eta}^n$. This simple form of the Slater determinants has led these CF wave functions to be called {\it simple states}. Since the differential operators commute, we may perform the projection to the LLL on the Slater determinants individually, as long as we keep them to the left of the Jastrow factor. This is understood in the following sections.


\section{Composite fermion basis}
\label{sec:basis}

We now present some sets of states that we prove to be bases for the space of simple states with $N+M$ particles and angular momentum $L$.  First, we need some definitions.  The simple states are on the form (\ref{2swf}), where the Slater determinants after projection are
\be{}
\Phi_Z(\mathbf{a})=\sum_{\rho\in S_N}\prod_{i=1}^N(-1)^{|\rho|}\partial_{z_{\rho_i}}^{a_i}
\ee
and 
\be{}
\Phi_W(\mathbf{b})=\sum_{\rho\in S_M}\prod_{i=1}^M(-1)^{|\rho|}\partial_{w_{\rho_i}}^{b_i}
\ee
with $a_i, b_i<N+M-1$.  In the following, we will focus on $\Phi_Z$, but we will never use that $N\leq M$, so replacing $W, w, M\leftrightarrow Z, z, N$ is always possible.

We define
\be{}
P_{N,M,L}=\{p\in\mathbb{Z}^N\ |\ M\geq p_1\geq...\geq p_N\geq0,\sum_{i=1}^N p_i=L \}.
\ee
This is the set of partitions of $L$ into an $M\times N$ box.  We order the partitions lexicographically.  The partitions can be visualized using Young diagrams.  Such a diagram is obtained by coloring $L$ cells of an $M\times N$ box compactly from the lower left. The number of colored cells in the lowest row is $p_1$, the next row corresponds to $p_2$, and so on.  As an example, we show all the Young diagrams corresponding to the set $P_{3,5,6}$ in FIG. \ref{fig:youngdiagrams}.  

\begin{figure}
\includegraphics[width=\columnwidth]{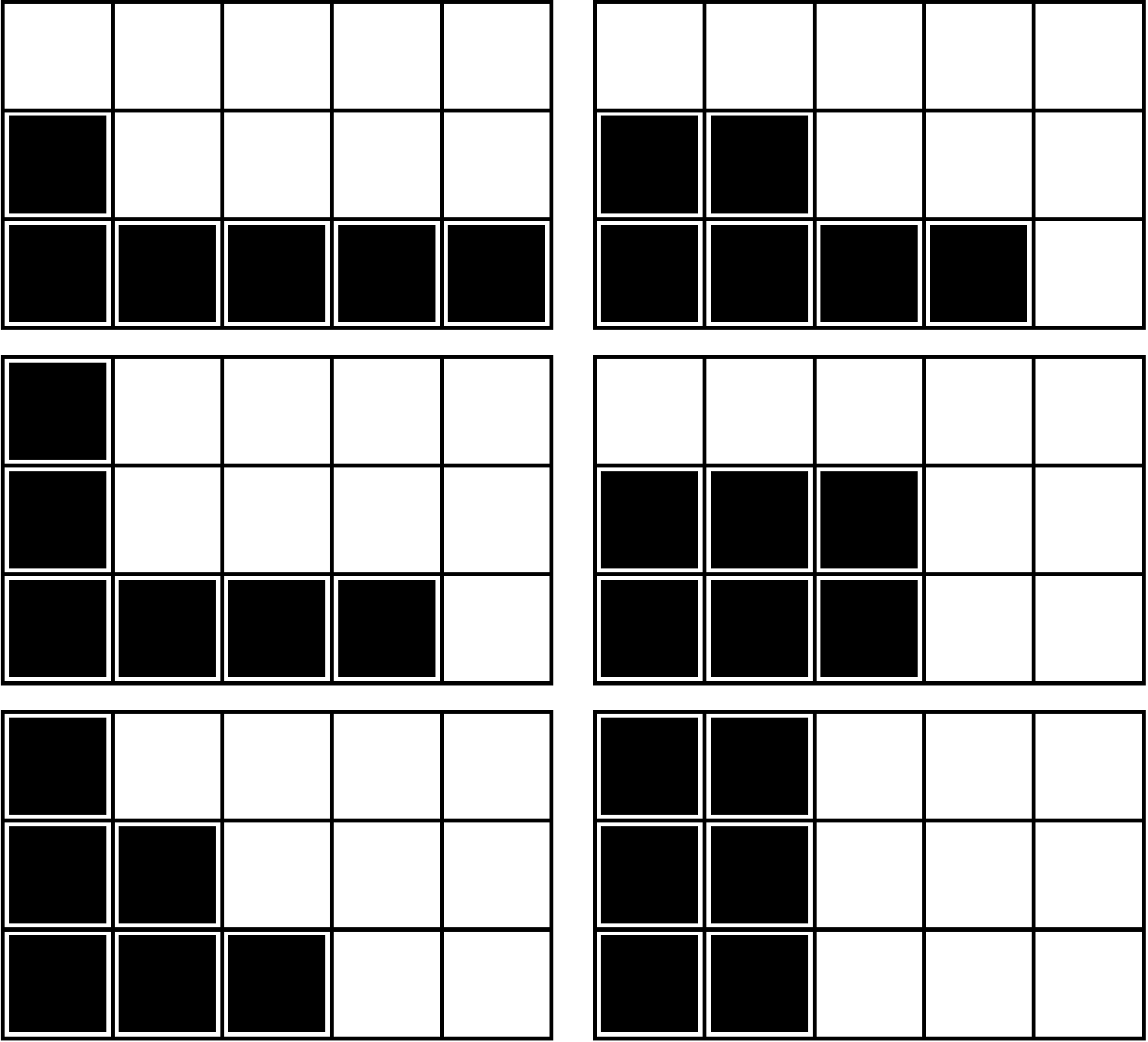}
\caption{Young diagrams corresponding to partitions of 6 into a $5\times 3$ box. \label{fig:youngdiagrams}}
\end{figure}

There is a one to one correspondence between $P_{N,M,\Lambda}$ and $\{\Phi_Z(\mathbf{a})|\sum_{i=1}^N a_i+1-i=\Lambda\}$ given by 
\be{}
p\leftrightarrow \Phi_Z(\mathbf{a}(p)),\ \ a_i(p)=i-1+p_{N+1-i}.
\ee
We define the following differentiation operators:

\be{}
\Delta_{z^n}\equiv\sum_{i=1}^N \partial_{z_i}^n,\ \Delta_{w^n}\equiv\sum_{i=1}^{M} \partial_{w_i}^n,
\ee{}
and
\be{}
\Delta_Z(p\in P_{N,M,\Lambda})\equiv\sum_{\rho\in S_N}\prod_{i=1}^N\partial_{z_{\rho_i}}^{p_i}. 
\ee{}
We have that 
\be{}\label{symdiff}
\Delta_Z(p)\Phi_Z(\bm{a}) &=\sum_{\sigma,\rho\in S_N}(-1)^{|\rho|}\prod_{i=1}^N\prod_{j=1}^N\partial_{z_{\sigma_i}}^{p_i}\partial_{z_{\rho_i}}^{a_i}\\
&=\sum_{\sigma,\rho\in S_N}(-1)^{|\rho|}\prod_{i=1}^N\prod_{j=1}^N\partial_{z_{\rho_i}}^{p_{\sigma_i}}\partial_{z_{\rho_i}}^{a_i}\\
&=\sum_{\sigma\in S_N} \Phi_Z(\bm{a}+\sum_{k=1}^N p_{\sigma_k}\mathbf{e}_k)
\ee
and
\be{}\label{easysymdiff}
\Delta_{z^n}\Phi_Z(\bm{a})=\sum_{k=1}^N\Phi_Z(\bm{a}+n\mathbf{e}_k).
\ee
The maximal $L=MN$ state occurs when the Slater determinants contain minimal differentiation operators, i.e. 
\be{}
\Psi_{(L=MN)}=\Phi_Z(\bm{\alpha})\Phi_W(\bm\beta)J, \label{eq:Lmax}
\ee
where $\alpha_i, \beta_i=i-1$.  We will continue to use the vectors $\bm\alpha$ and $\bm\beta$ as they are useful not only for $L=MN$ but also for general $L$.  We can now state our main theorem:\\

\begin{theorem}
The sets
\be{}
B_{Z,L} &=\{\Phi_Z(\bm a(p))\Phi_W(\bm\beta)J\ |\ p\in P_{N,M,MN-L}\}\\
D_{Z,L} &=\{\Delta_Z(p)\Phi_Z(\bm{\alpha})\Phi_W(\bm\beta\  )J\ |\ p\in P_{N,M,MN-L}\}
\label{eq:basissets}
\ee
are bases for the set of simple CF states with $N+M$ particles and angular momentum $L$.  The sets $B_{W,L}$ and $D_{W,L}$ are also bases.
\end{theorem}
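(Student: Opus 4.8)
The plan is to recast the operators $\Phi_Z(\mathbf a(p))$ in the language of symmetric functions, which at once relates the four sets and isolates the single hard point. Since $a_i(p)=i-1+p_{N+1-i}$ is strictly increasing, the bialternant formula gives $\Phi_Z(\mathbf a(p))=\pm\, s_p(\partial_z)\,\Phi_Z(\bm\alpha)$, where $s_p$ is the Schur polynomial and $\Phi_Z(\bm\alpha)=\prod_{i<j}(\partial_{z_j}-\partial_{z_i})$ is the Vandermonde operator. Hence every element of $B_{Z,L}$ equals $\pm\,s_p(\partial_z)\Psi_{(L=MN)}$, while by (\ref{symdiff}) every element of $D_{Z,L}$ equals a nonzero multiple of $m_p(\partial_z)\Psi_{(L=MN)}$, with $m_p$ the monomial symmetric polynomial. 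Because the Schur-to-monomial transition (the Kostka matrix) is unitriangular in dominance order, and any $\lambda\unlhd p$ with $p$ in the box stays in the box (those with more than $N$ parts give $m_\lambda(\partial_z)=0$), the two sets span the same subspace and one is linearly independent iff the other is. So it suffices to treat $B_{Z,L}$; the sets $B_{W,L},D_{W,L}$ then follow by the stated $Z\leftrightarrow W$ symmetry.

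Next I would show $B_{Z,L}$ spans the space $S_L$ of simple states using two vanishing relations. First, $\Delta_{\eta^n}J:=(\Delta_{z^n}+\Delta_{w^n})J=0$ for every $n\ge1$: the polynomial $\Delta_{\eta^n}J$ is totally antisymmetric in all $N+M$ coordinates of degree $\binom{N+M}{2}-n$, which lies below the minimal degree $\binom{N+M}{2}$ of a nonzero totally antisymmetric polynomial, hence vanishes. Since $\Delta_{\eta^n}$ commutes with $\Phi_Z$ and $\Phi_W$, it annihilates every simple state, so $\Delta_{w^n}=-\Delta_{z^n}$ as operators on $S_L$ (for all $L$). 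As the power sums $\Delta_{w^n}$ generate the $S_M$-symmetric operators in $\partial_w$, any simple state $s_p(\partial_z)s_q(\partial_w)\Psi_{(L=MN)}$ can have its $w$-derivatives traded for $z$-derivatives, landing in the span of $\{\,s_{p'}(\partial_z)\Psi_{(L=MN)}\,\}$. Second, any such term with $p'_1>M$ vanishes: the largest exponent in $\Phi_Z(\mathbf a(p'))$ is $a_N=N-1+p'_1\ge N+M$, but each variable occurs to degree at most $N+M-1$ in $\Phi_W(\bm\beta)J$, so every term of the determinant carries a derivative of order exceeding the available degree. Together these confine the spanning set to box partitions, giving $\dim S_L\le|P_{N,M,MN-L}|$.

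It then remains to prove linear independence of $B_{Z,L}$, where I expect the real work to sit. I would pass to the component of maximal $z$-degree: the top-$z$-degree part of $\Psi_{(L=MN)}$ is the pure-$z$ symmetric polynomial $\Theta=c_M\,\Delta(\partial_z)\det(z_i^{M+j-1})$, a nonzero multiple of $\Delta(\partial_z)[s_{(M^N)}(z)\Delta(z)]$. Since $s_p(\partial_z)$ lowers the $z$-degree by the same amount $MN-L$ for all $p$, the top-$z$-degree part of $\sum_p c_p\,s_p(\partial_z)\Psi_{(L=MN)}$ is exactly $\sum_p c_p\,s_p(\partial_z)\Theta$, reducing the claim to the single-component statement that $\{s_p(\partial_z)\Theta\}_{p\in P_{N,M,MN-L}}$ are linearly independent. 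Working in the apolar (Fischer) inner product, under which $f(\partial_z)$ is adjoint to multiplication by $f$ and Slater determinants pair as $\langle a_\lambda,a_{\lambda'}\rangle\propto\delta_{\lambda\lambda'}$, one finds for any $\nu$ that $\langle f(\partial_z)\Theta,\,s_\nu\rangle$ is proportional to the coefficient of $s_{(M^N)}$ in $f\cdot s_\nu$, where $f=\sum_p c_p s_p$. The decisive input is the rectangle Littlewood–Richardson identity that $s_{(M^N)}$ occurs in $s_p\, s_\nu$ with coefficient $\delta_{\nu,p^{c}}$, $p^{c}$ the complement of $p$ in the $M\times N$ box; letting $\nu$ range over box partitions of $L$ recovers every coefficient $c_p$ and forces $c=0$.

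Assembling the pieces, $B_{Z,L}$ is independent and spans, hence is a basis with $\dim S_L=|P_{N,M,MN-L}|$, and the remaining three sets follow from the transition-matrix and $Z\leftrightarrow W$ arguments. The main obstacle is this last independence step: the spanning bound is soft, resting only on the two degree-counting vanishings, whereas independence requires a genuinely nontrivial pairing identity, and it is precisely the rectangle complementation property of Littlewood–Richardson coefficients that makes the system collapse onto the box partitions. A secondary point to check carefully is that the two vanishing relations really exhaust the reductions, i.e. that trading $w$ for $z$ followed by discarding out-of-box operators never produces a hidden extra dependence — but this is automatic once the matching lower bound from independence is in hand.
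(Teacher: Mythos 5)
Your proposal is correct, and while the spanning half retraces the paper's logic in different clothing, your linear-independence argument is genuinely different from ours. On spanning: your identification $\Phi_Z(\mathbf a(p))=\pm\,s_p(\partial_z)\Phi_Z(\bm\alpha)$ via the bialternant formula, the Kostka unitriangularity relating the Schur set $B_{Z,L}$ to the monomial set $D_{Z,L}$, the expansion of symmetric operators in power sums, and the trade $\Delta_{w^n}\to-\Delta_{z^n}$ are exactly the content of our Lemmas \ref{Lemma1}, \ref{Lemma2} and \ref{gentrans}, which we prove by explicit orderings and term-by-term cancellation; your degree-counting proof that $\Delta_{\eta^n}J$ is an antisymmetric polynomial of degree below $\binom{N+M}{2}$ and hence zero is in fact cleaner than our pairwise cancellation. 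The real divergence is in independence: we set all $w_i=0$ and run a lexicographic leading-term argument on the surviving $z$-polynomials (Lemma \ref{linearindependence}), whereas you pass to the top $z$-degree component $\Theta\propto\Delta(\partial_z)\bigl[s_{(M^N)}(z)\Delta(z)\bigr]$ and use the Fischer pairing together with the rectangle Littlewood--Richardson identity $c^{(M^N)}_{p\nu}=\delta_{\nu,p^c}$ to read off each coefficient. That route is more conceptual and has the pleasant feature that the box-complementation $p\leftrightarrow p^c$, which in our presentation appears only a posteriori as the symmetry of the basis size about $L=MN/2$, becomes the actual mechanism forcing independence; the price is reliance on imported symmetric-function machinery (and a few nonvanishing constants to verify, e.g. that $\Phi_W(\bm\beta)$ acting on the $w$-Vandermonde gives a nonzero constant so that $\Theta\neq0$), where our argument is elementary and self-contained. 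Your closing remark is also handled correctly: the spanning step only bounds the dimension from above, and the matching lower bound is supplied by independence, exactly as in our proof.
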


In fact, $B_{Z,L}=B_{W,L}$ for $MN-L$ even.  Otherwise, the sets contain $(-1)$ times the vectors of the other.  This can be seen as a consequence of reflection symmetry, as introduced in \cite{meyer-lia16}.

\begin{proof}
We first show that $D_{Z,L}$ spans $B_{Z,L}$; we already know from Eq. (\ref{symdiff}) that $B_{Z,L}$ spans $D_{Z,L}$.  We then show that $B_{Z,L}$ spans the set of simple CF states, and finally that $B_{Z,L}$ is a linearly independent set.  This must also hold for $D_{Z,L}$ since $|D_{Z,L}|=|B_{Z,L}|$.  We refer to lemmas that we prove in the appendix.\\

Lemma \ref{Lemma1} states that $D_{Z,L}$ spans $B_{Z,L}$:
\be{}
\Phi_Z(\mathbf{a})=\sum_{p\in P_{N,M,\Lambda}} c_p\Delta_Z(p)\Phi_Z(\bm{\alpha})
\ee
for coefficients $\bm c\in \mathbb{Q}^{|P_{N,M,\Lambda}|}$, where $\Lambda=MN-L$. The lemma also shows that we can write a general simple CF state as
\be{}
\Psi=\phi_Z(\bm a)\phi_W(\bm b)J=\sum_{p\in P_{N,M,\Lambda}} c_p\Delta_Z(p)\phi_Z(\bm \alpha)\phi_W(\bm b)J.
\ee
Lemma \ref{Lemma2} states that we can write
\be{}\label{factorlemma}
\Delta_Z(p) =\sum_{\tilde{p}\in P_{N,\Lambda,\Lambda}}d_{p\tilde{p}}\prod_{i=1}^N \Delta_{z^{\tilde{p}_i}},
\ee
with coefficients $d_{p\tilde p} \in \mathbb{Q}$. We can therefore write 
\be{}
\Psi=\sum_{p\in P_{N,M,\Lambda}}\sum_{\tilde p\in P_{N,\Lambda,\Lambda}} c_pd_{p\tilde p}\prod_{i=1}^N\Delta_{z^{\tilde p_i}}\phi_Z(\bm \alpha)\phi_W(\bm b)J.
\ee
Next, we can apply generalized translation invariance (Lemma \ref{gentrans}),
\be{}
(\Delta_{z^n}+\Delta_{w^n})\Psi=0, 
\ee
to replace the $\Delta_{z^{\tilde{p}_i}}$ operators with $-\Delta_{w^{\tilde{p}_i}}$, giving

\be{}
\Psi=\sum_{p\in P_{N,M,\Lambda}}\sum_{\tilde p\in P_{N,\Lambda,\Lambda}} s(\tilde p)c_pd_{p\tilde p}\prod_{i=1}^N\Delta_{w^{\tilde p_i}}\phi_Z(\bm \alpha)\phi_W(\bm b)J,
\ee
where $s(\tilde p)$ is $(-1)$ if there is an odd number of $\tilde p_i$ that are non-zero, and $1$ otherwise.

We can see from eqs. (\ref{symdiff}, \ref{easysymdiff}) that this gives terms that are all on the form $\Phi_Z(\bm\alpha)\Phi_Z(\bm b') J$ for varying $\bm b'$.  We have now expressed a general simple state as a linear combination of elements of $B_{W,L}$, and the same can be done for $B_{Z,L}$.  All that remains is to show that $B_{Z,L}$ is a linearly independent set, and this result is Lemma \ref{linearindependence}.
\end{proof}

\begin{figure}
\includegraphics[width=\columnwidth]{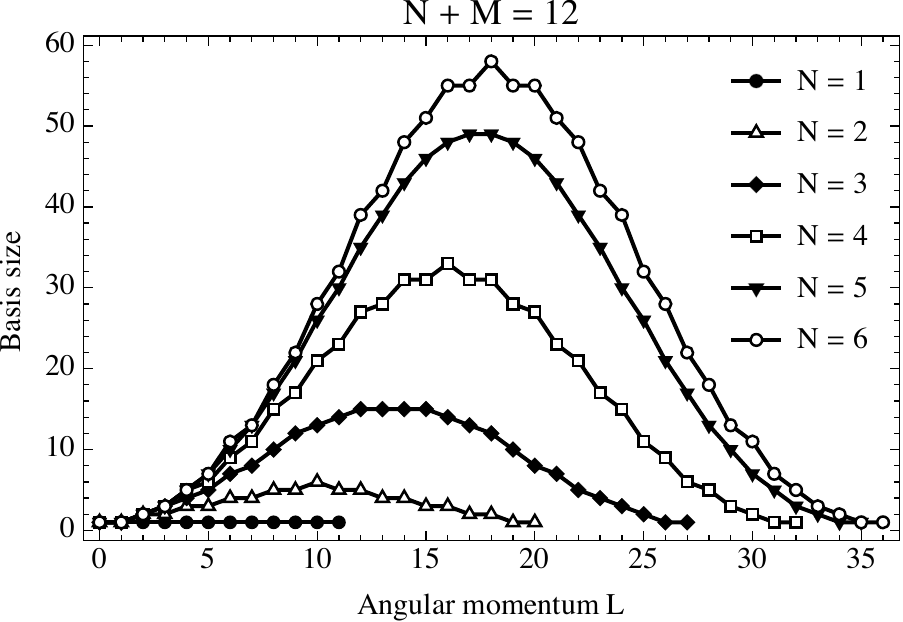}
\caption{The size of the simple CF basis as a function of $L$ for a total of $N + M = 12$ particles.
\label{fig:basissize}}
\end{figure}

To summarize, we have now shown that either of the sets $B_{Z, L}, B_{W, L}, D_{Z, L}, D_{W, L}$ (Eq. (\ref{eq:basissets}) ) are
basis sets for the simple CF wave functions at angular momentum $L$. In other words, when constructing basis sets for simple states, we can \emph{always} fix $\Phi_W$ ($\Phi_Z$) to $\Phi_W(\bm\beta)$ ($\Phi_Z(\bm\alpha)$) and then vary the $Z$ ($W$) determinant. The number of ways to do that, i.e. the \emph{size} of the basis, is simply $|P_{N, M, M N - L}|$. We plot this as function of $L$ for $N + M = 12$ in FIG. \ref{fig:basissize}. The symmetry of each curve about its midpoint $L=M N /2$ that was originally observed in \cite{meyer-lia16} is now easily understood because the number of ways to color $L$ cells in a Young diagram leaves $M N - L$ cells uncolored, meaning that the $L$ diagrams are 1-1 with the $M N - L$ diagrams.  We have illustrated this in FIG. \ref{fig:reflectionrotation}.

\begin{figure}
\subfloat[]{
\includegraphics[width=0.45\columnwidth]{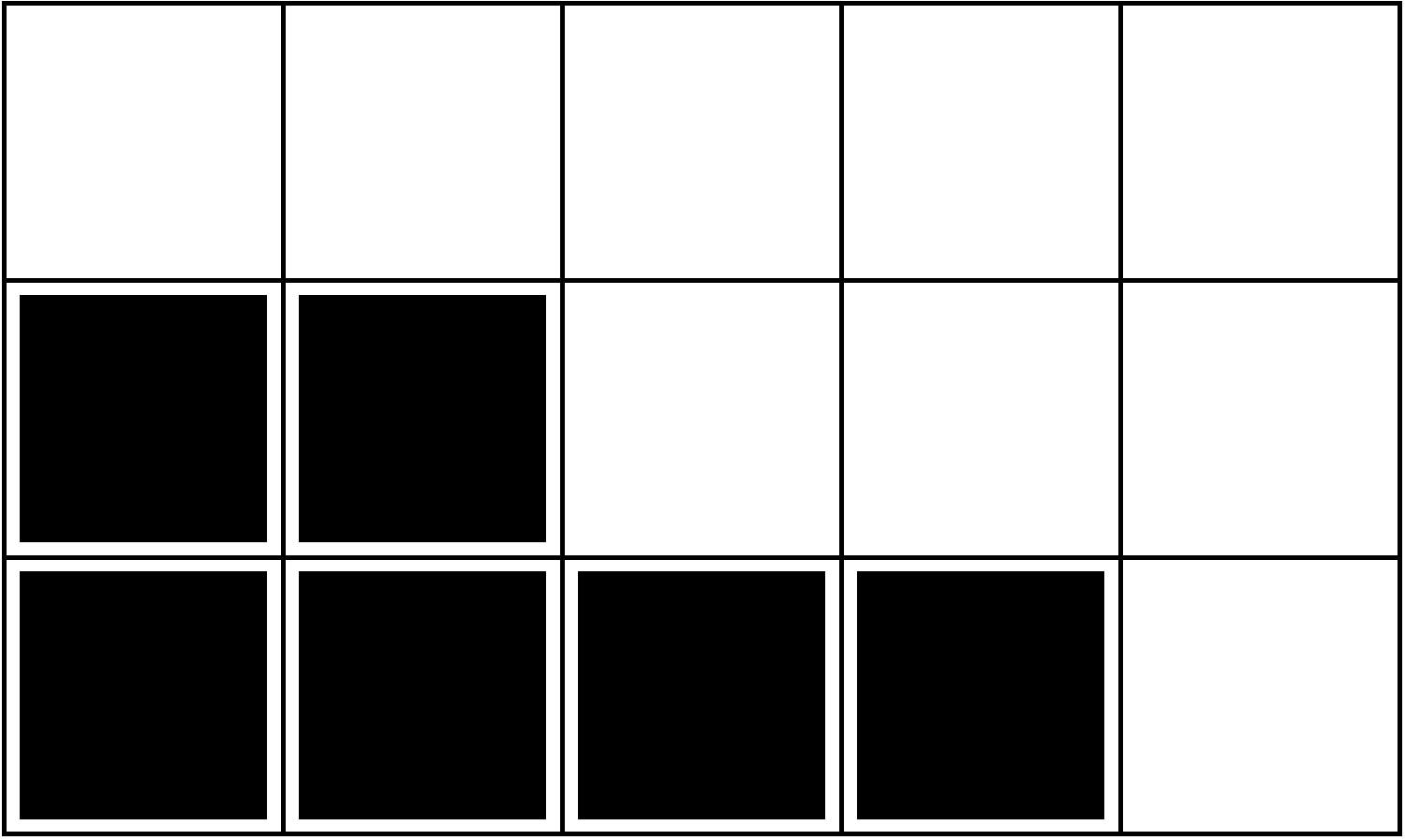}
}
\hfill
\subfloat[]{
\includegraphics[width=0.45\columnwidth]{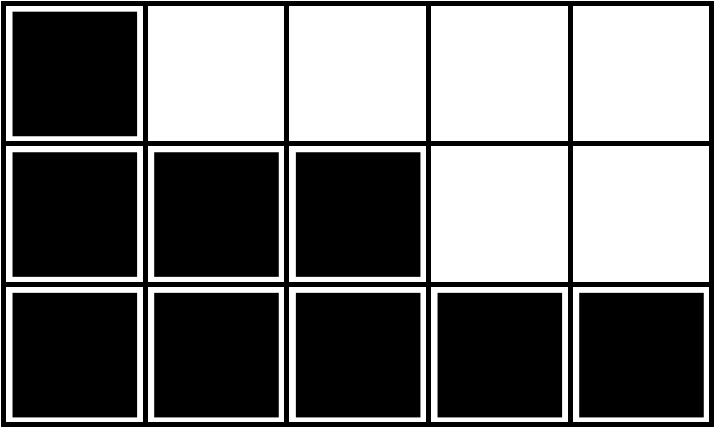}
}

\subfloat[]{
\includegraphics[width=0.27\columnwidth]{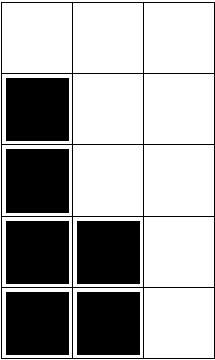}
}
\hspace{0.22\columnwidth}
\subfloat[]{
\includegraphics[width=0.27\columnwidth]{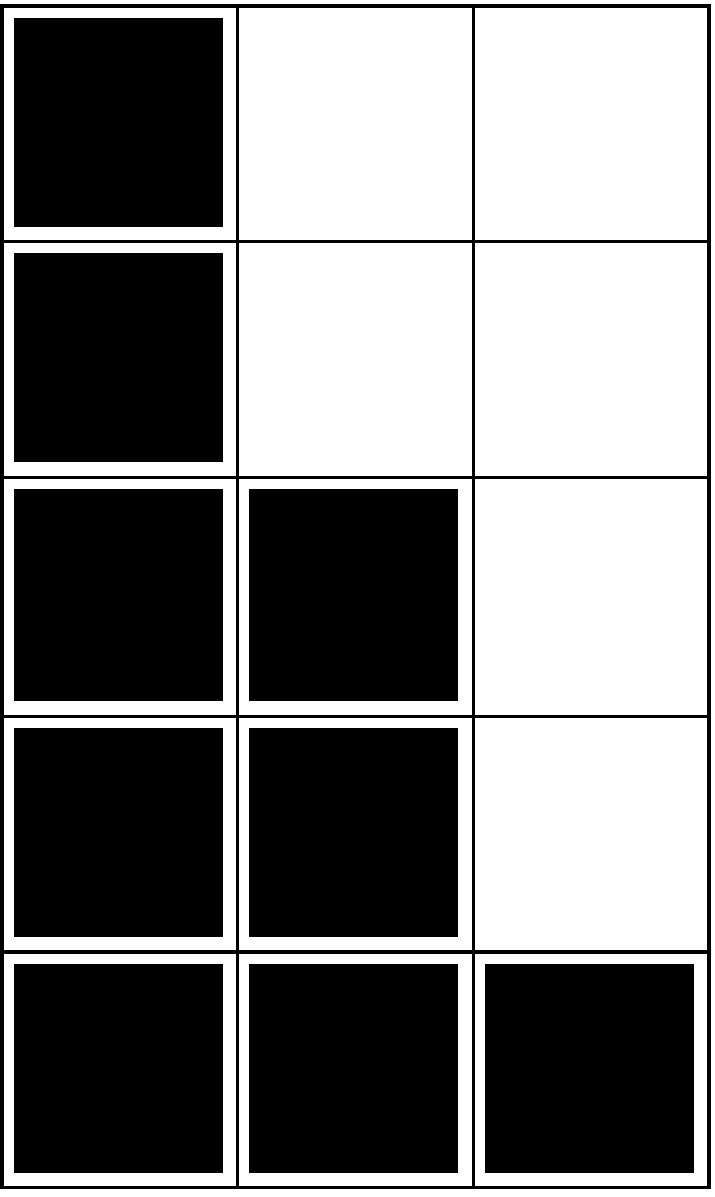}
} 

\caption{Young diagrams illustrating 1-1 correspondences. The left-right correspondence is between partitions of $P_{N,M,L}$ and $P_{N,M,NM-L}$, explaining the symmetry of the number of states displayed in FIG. \ref{fig:basissize}.  The top-bottom correspondence is between $P_{N,M,L}$ and $P_{M,N,L}$ giving bijections $B_{Z,L}\leftrightarrow B_{W,L}$ and $D_{Z,L}\leftrightarrow D_{W,L}$.  \label{fig:reflectionrotation}}

\end{figure}


\section{Structure of the lowest-lying wave functions}
\label{sec:rwf}

Using the simple CF basis sets presented in the previous section, we can now diagonalize the interaction Hamiltonian in the simple CF subspace, to produce approximations to the lowest-lying wave functions of the two-component rotating gas. Because the size of the simple CF basis is so much smaller than the size of the Hilbert space, this is often a huge computational simplification. We now take advantage of this to study some aspects of these low-lying states.

In order to increase our physical understanding of the structure of these states, one option is to study density and pair correlation functions. Another approach that is particularly suitable to visualizing vortex structure is to compute the so-called restricted wave function (RWF) $\psi_r (\mathbf{r})$\cite{rwf-1, rwf-2}. To find the RWF of a given many-body wave function $\Psi$, one first calculates a set of particle coordinates $\{\mathbf{r}_i^*\}_{i=1}^{N+M}$ that maximises $|\Psi|^2$. The restricted wave function is defined as
\be{}
\psi_r (\mathbf{r}) = \frac{\Psi(\mathbf{r}, \mathbf{r}_2^*, \mathbf{r}_3^*,\ldots)}{\Psi(\mathbf{r}_1^*, \mathbf{r}_2^*, \mathbf{r}_3^*,\ldots)}
\ee

We see that this function varies as one of the particles is allowed to move from its maximizing position. The amplitude of $\psi_r$ gives the relative amplitude of the many-body wave function compared to the maximum, and the argument gives the change in phase. The vortices can be identified from plots of $\psi_r$ where the nodes of the amplitude meet lines where the phase jumps.

\begin{figure}
\includegraphics[width=\columnwidth]{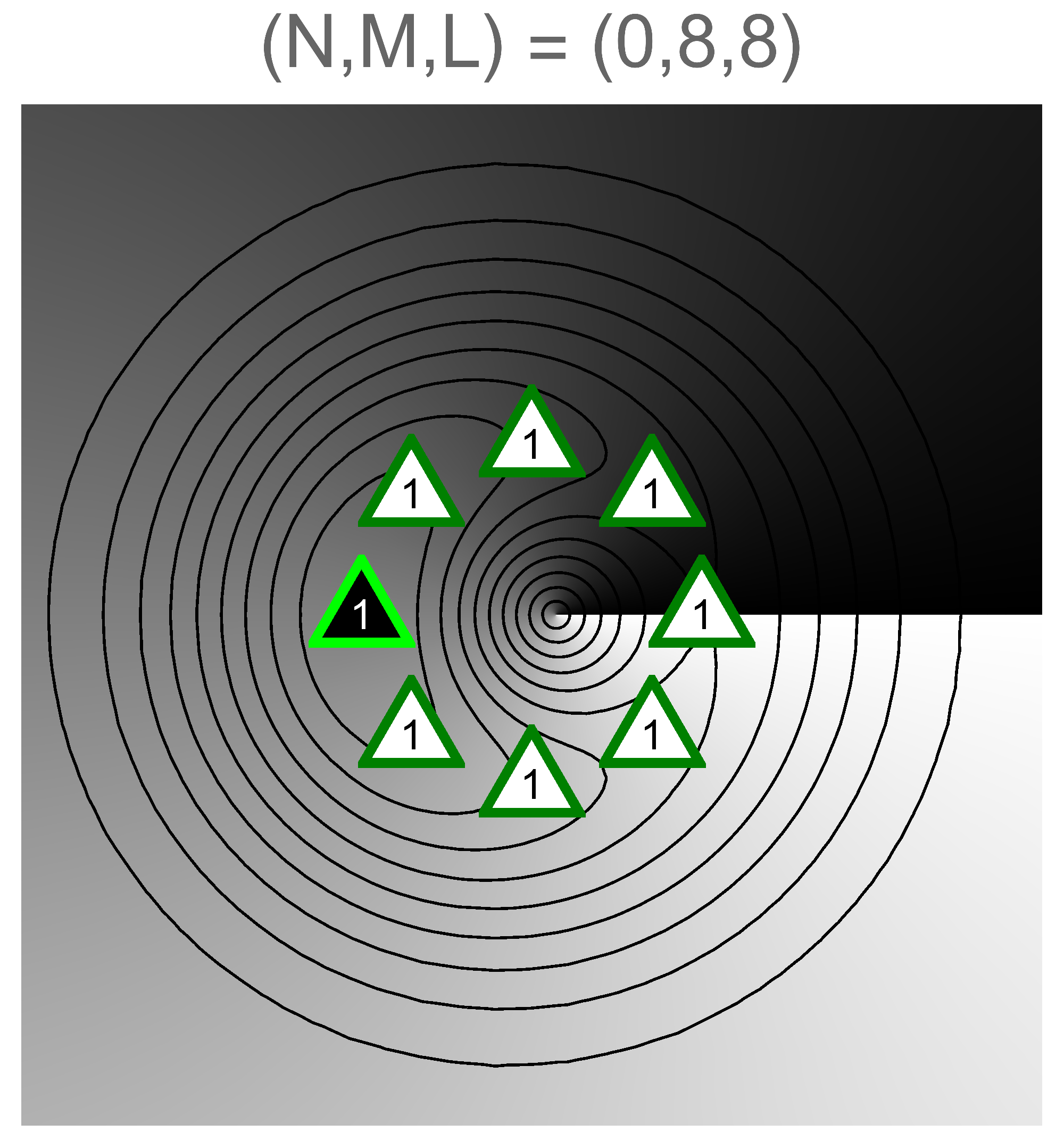}
\caption{Plot of the restricted wave function for the single vortex state of 8 particles. \label{fig:single-vortex}}
\end{figure}

An example of an RWF plot is shown in FIG. \ref{fig:single-vortex}. The wave function $\Psi$ in this example is the exact ground state for a single-species gas with 8 particles at $L=M=8$. For a single component, the cases $L=M$ are known as single vortex states \cite{korslund06, vortex-review}. The triangular plot markers show the optimal positions $\{\mathbf{r}_i^*\}$. The number on each plot marker specifies how many particles share that position. The plot marker with black filling corresponds to the particle whose position $\mathbf{r}$ is varied in the plot. The contour lines show lines of constant amplitude of $\psi_r$ and the color shows the phase change, where black corresponds to $-\pi$ and white to $\pi$. In this case, the configuration of highest $|\Psi|^2$ is a ring of particles, with a vortex clearly visible close to the center of the ring.

In our two-component case, we can define an RWF for each species, $\psi_{rZ}$ and $\psi_{rW}$ for minority and majority species respectively. The difference is simply the species of the particle whose position we vary. We will use triangles pointing down as plot markers for the minority component particles, and triangles pointing up for majority particles (like $\nabla$ and $\Delta$, respectively).

\begin{figure*}

\subfloat[]{
\includegraphics[width=0.32\textwidth]{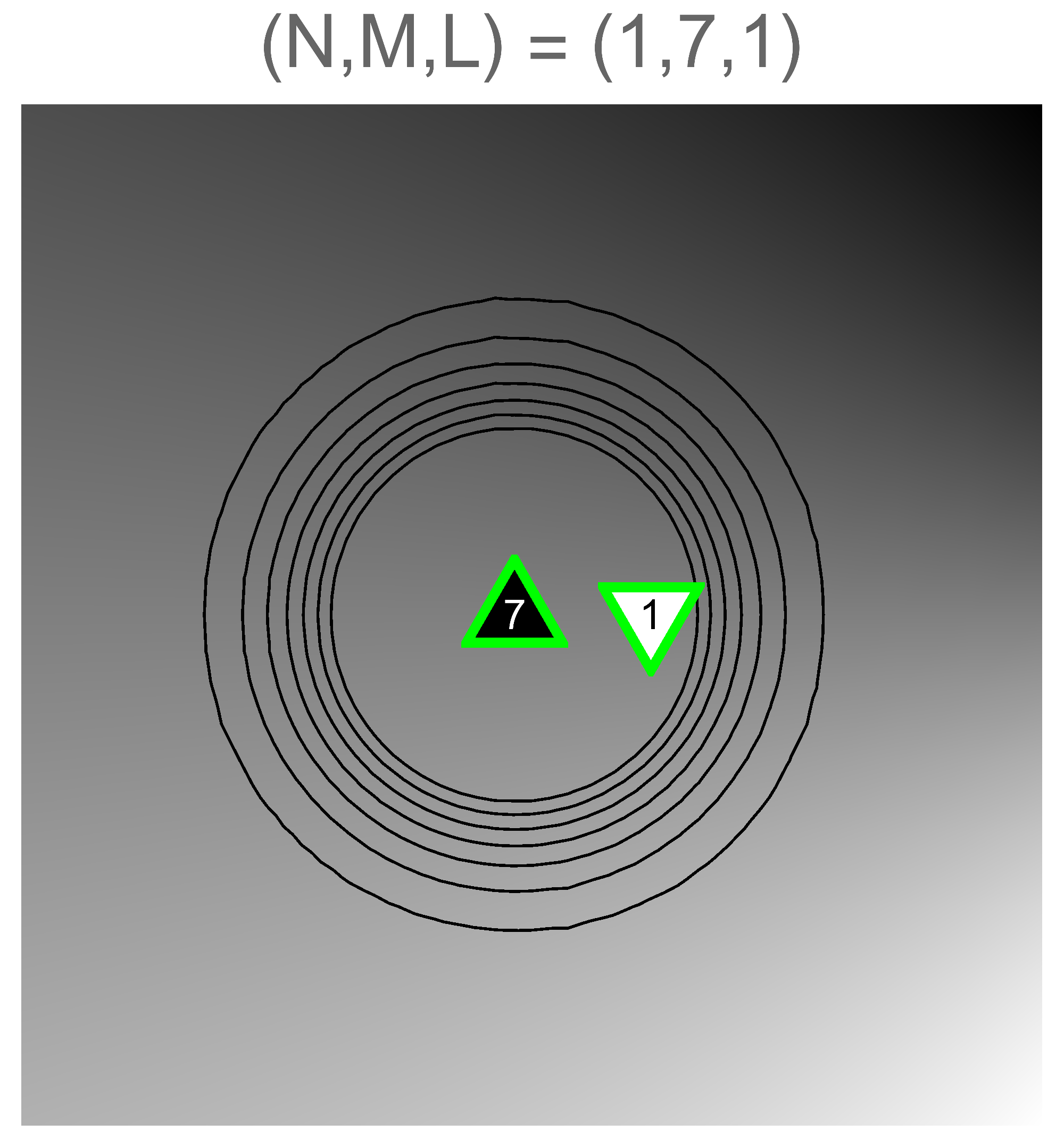}
}
\hfill
\subfloat[]{
\includegraphics[width=0.32\textwidth]{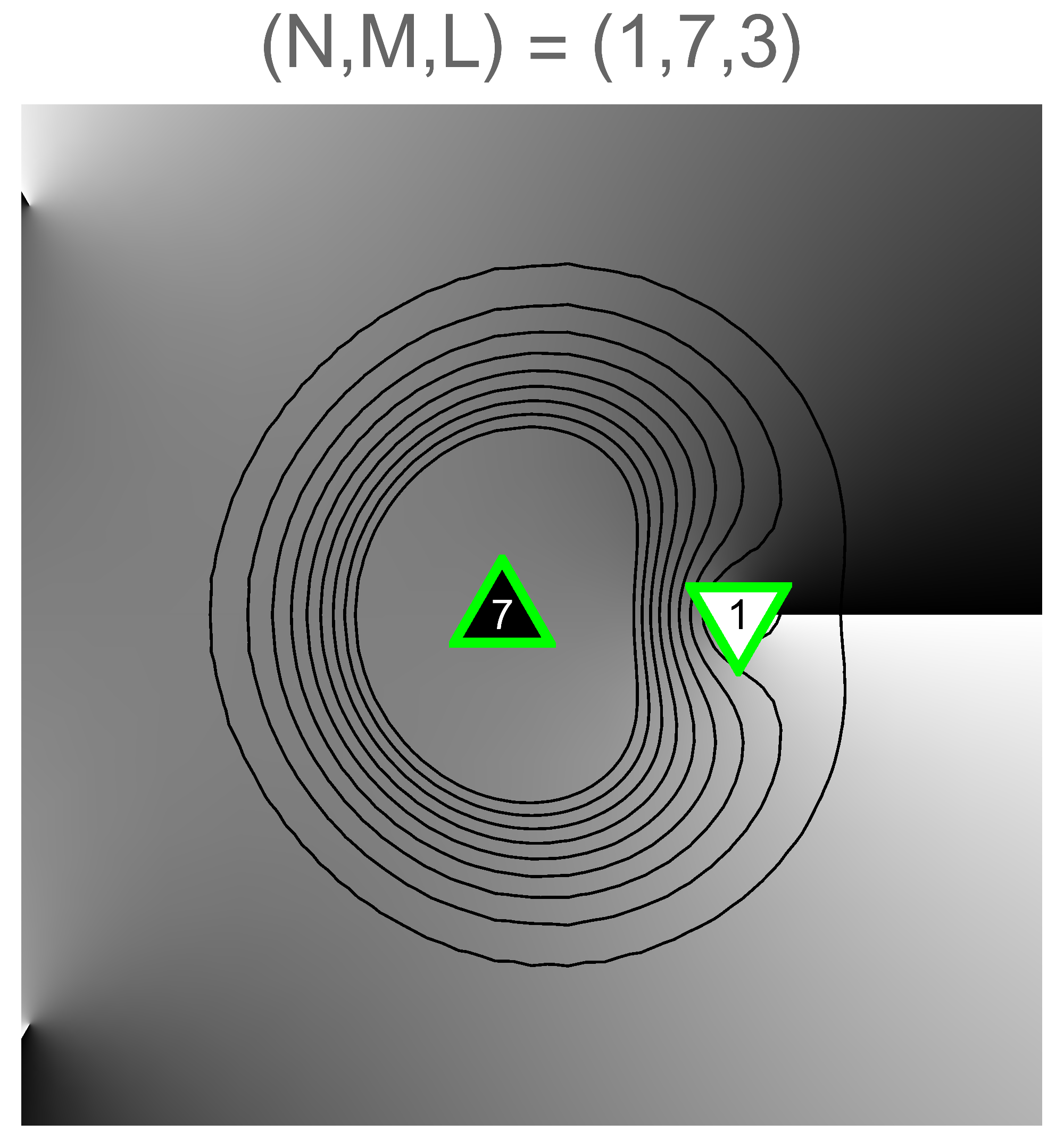}
}
\hfill
\subfloat[]{
\includegraphics[width=0.32\textwidth]{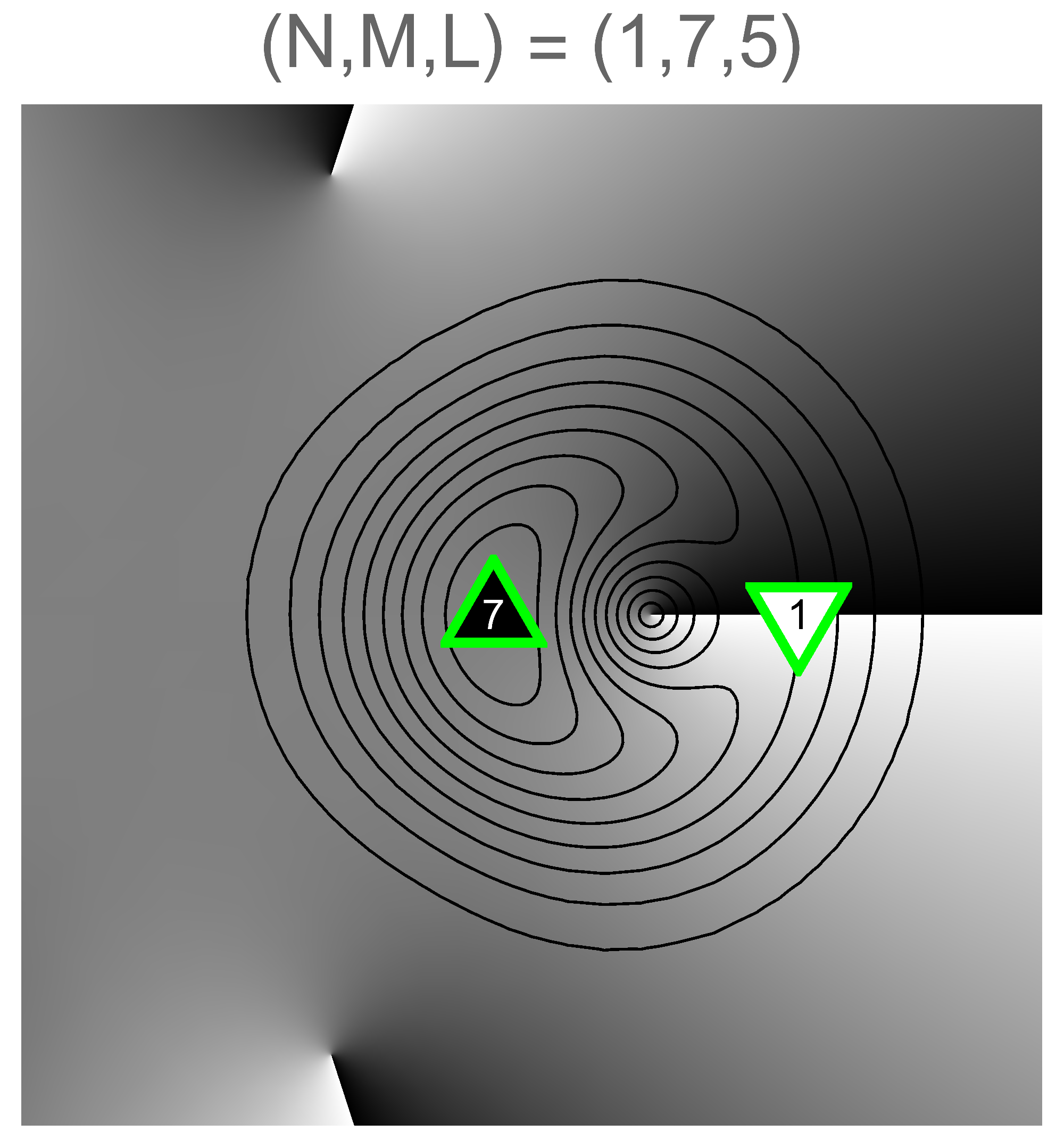}
}


\subfloat[]{
\includegraphics[width=0.32\textwidth]{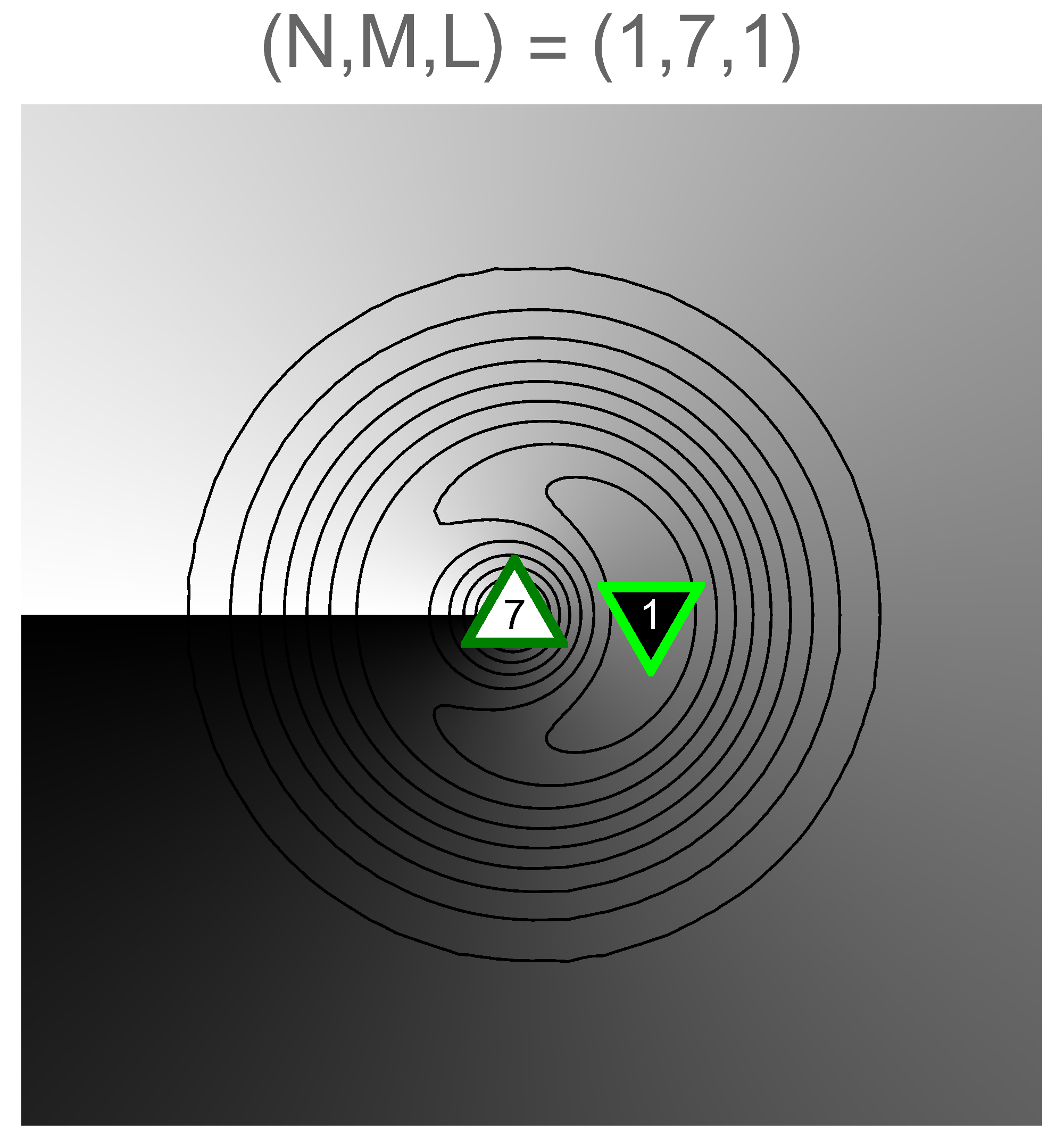}
}
\hfill
\subfloat[]{
\includegraphics[width=0.32\textwidth]{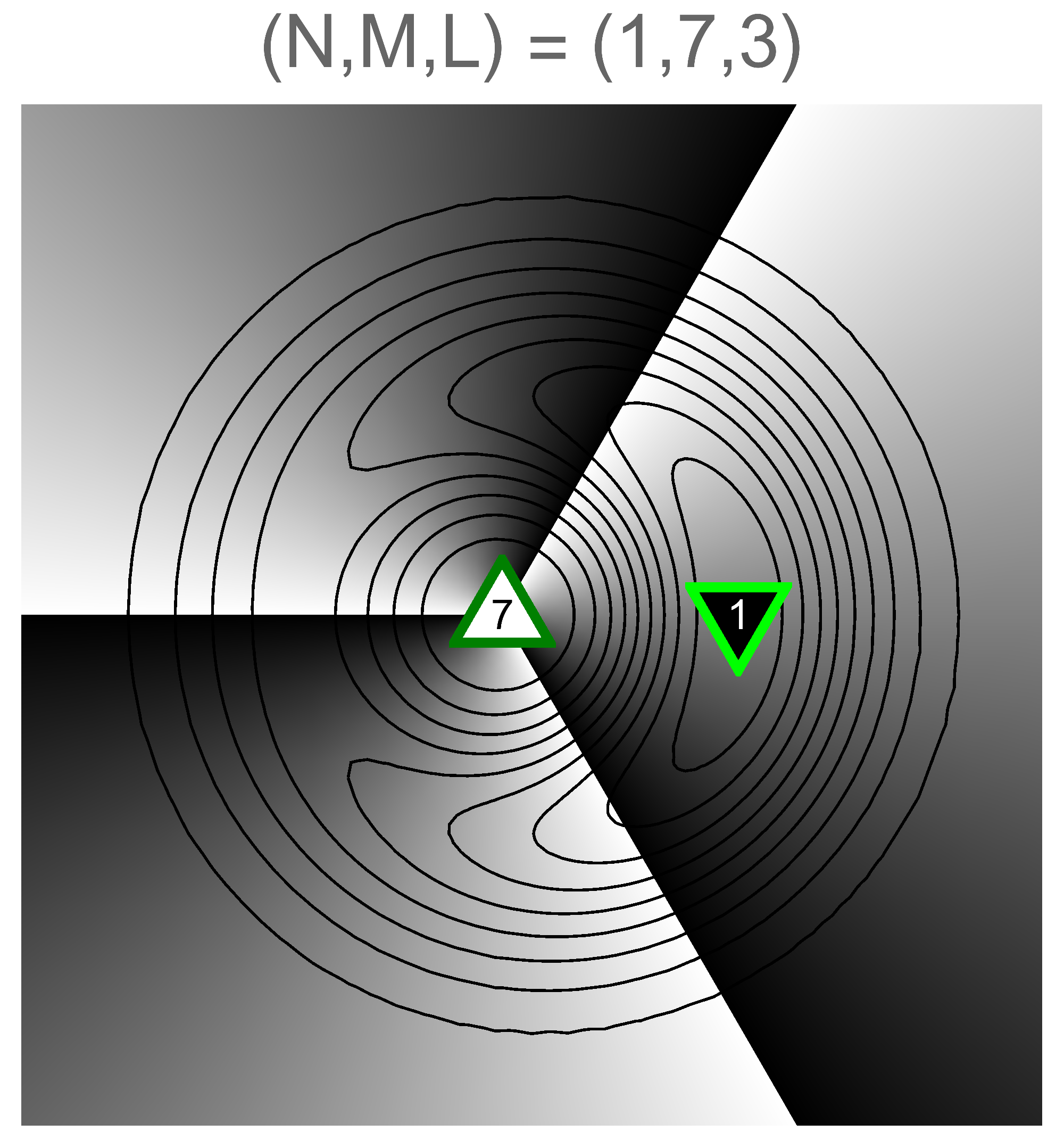}
}
\hfill
\subfloat[]{
\includegraphics[width=0.32\textwidth]{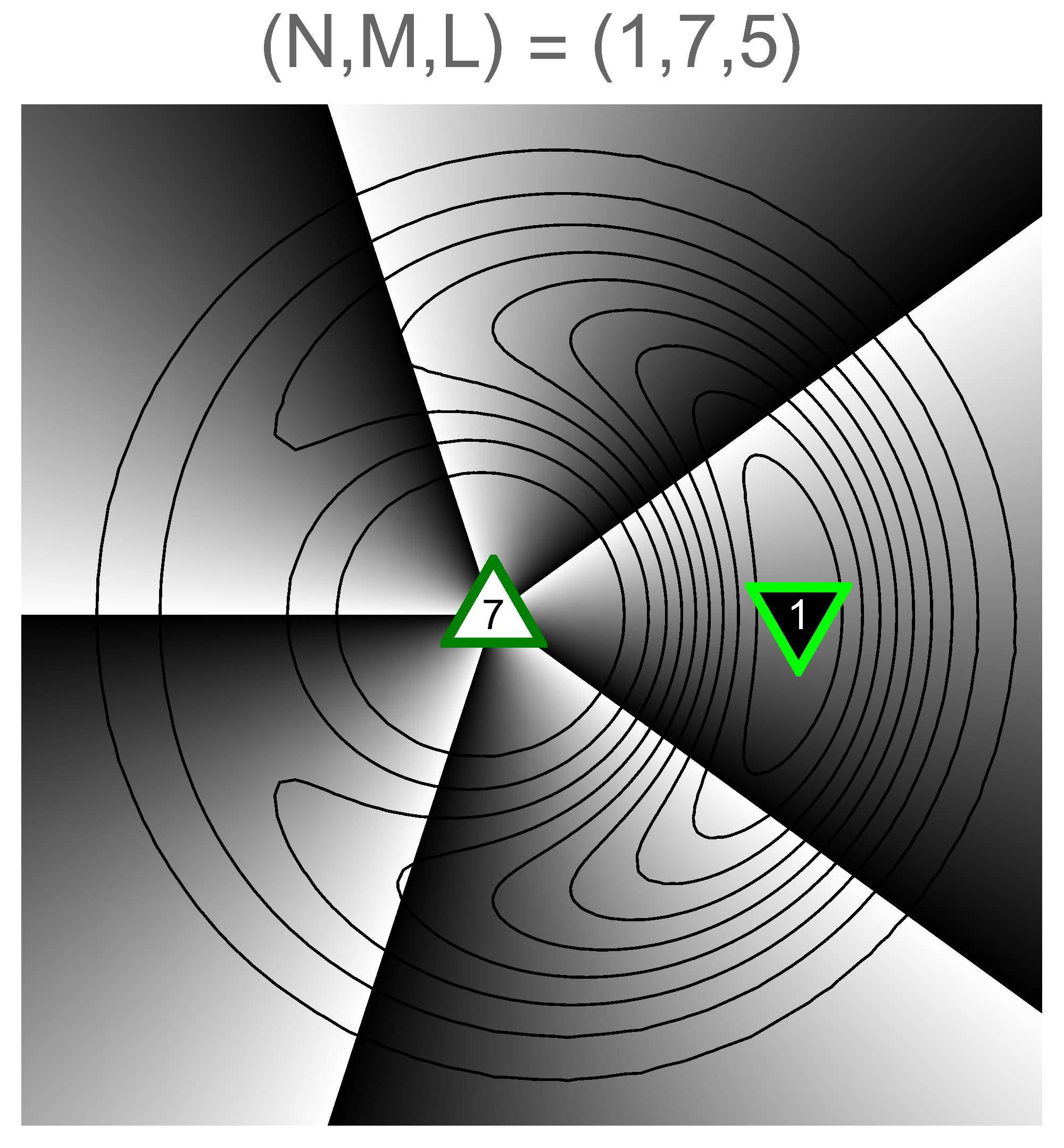}
}

\caption{Restricted wave functions for very low angular momenta for 1 minority and 7 majority particles. (a) - (c) show $\psi_{rW}$, (d) - (f) show $\psi_{rZ}$. \label{fig:17lowL}}

\end{figure*}

In FIG. \ref{fig:17lowL} we see that there is a substantial difference between the RWF plots $\psi_{rZ}$ and $\psi_{rW}$. In the top row we see a vortex approaching the cloud from the side where the lone minority particle is located, starting far outside the cloud and coming closer and closer to the center as $L$ increases. On the other hand, the lone minority particle sees a vortex coinciding with the lump of majority particles already at $L=1$, a so-called coreless vortex \cite{kasamatsu-tsubota-ueda05, christensson08}. This is consistent with findings using full diagonalization \cite{bargi07}. The multiplicity of this vortex increases with $L$, and the vortex and majority particles move together, away from the lone minority particle. This displays an example of how the ``perspectives'' of the two species can be very different for a given state $\Psi$. A similar pattern is also observed for $(N,M) = (2,6), (3,5)$ for $L<M$.

\begin{figure*}

\subfloat[]{
\includegraphics[width=0.32\textwidth]{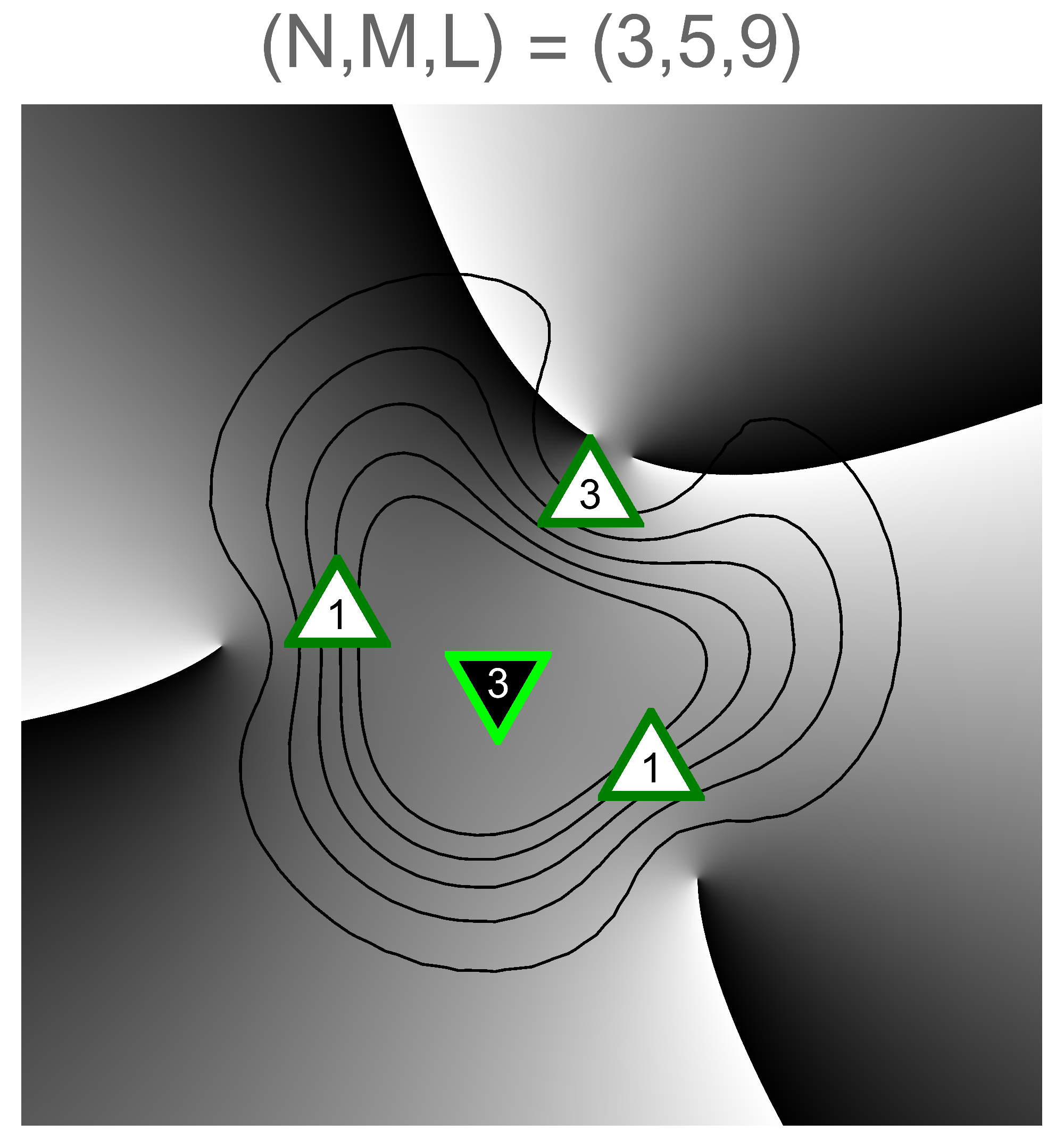}
}
\hfill
\subfloat[]{
\includegraphics[width=0.32\textwidth]{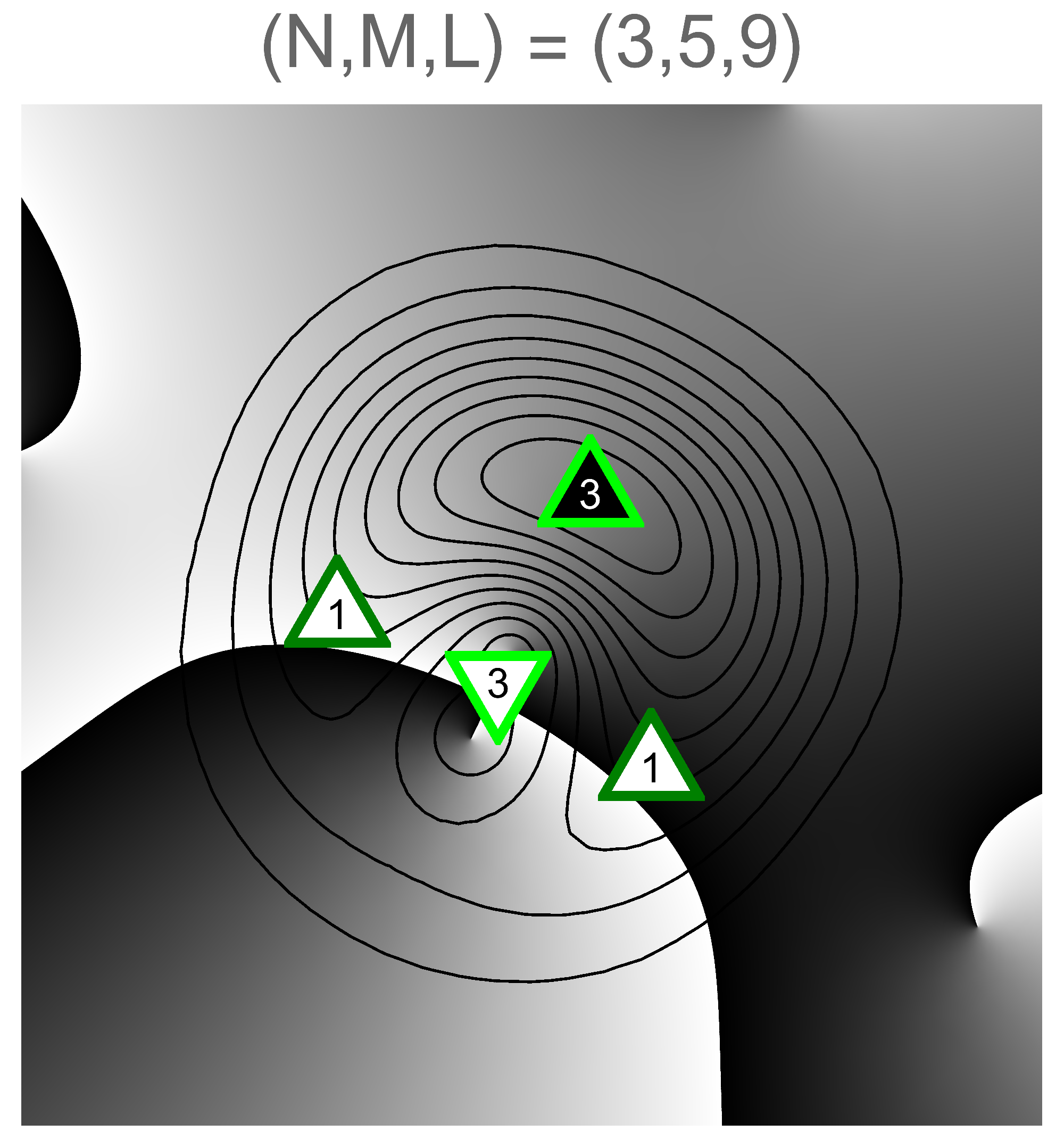}
}
\hfill
\subfloat[]{
\includegraphics[width=0.32\textwidth]{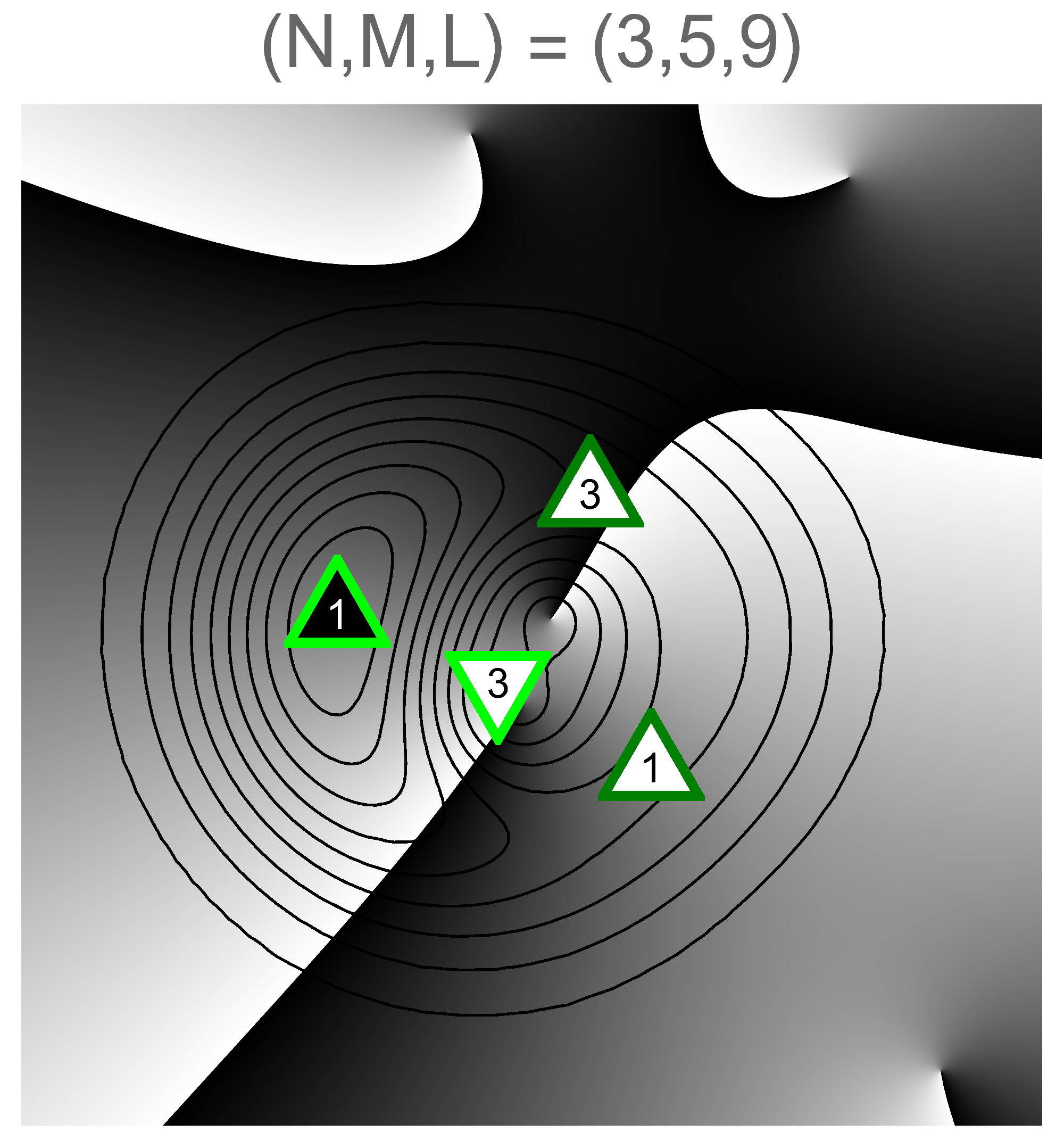}
}

\caption{Three visualizations of the RWF for $(N, M, L) = (3, 5, 9)$. (a) shows $\psi_{rZ}$, while (b) and (c) show $\psi_{rW}$ for two choices of the particle whose position we vary. \label{fig:35}}

\end{figure*}

For $L \geq M$, a variety of configurations are realized as those maximizing $|\Psi|^2$. The common features are that, as $L$ increases and the particles spread out more, particles of the same species still tend to stay close to or on top of some of the other ones, rather than all of them spreading out. The vortices are exclusively located near the particles of the opposite species: no same-species nodes are observed for the $L$ we consider. Finally, the mean number of phase jump lines per particle of the opposite species increases with $L$. A more or less typical situation is shown in FIG. \ref{fig:35}. The majority species has split into three groups, but three of them are still close together. The minority particles are all in the same position. In both (b) and (c), there are two nodes close to the lump of three minority particles. Since the only difference between (b) and (c) is which majority particle coordinate we choose to vary, this demonstrates that the vortex configuration that is seen is not sensitive to that choice, which is what we would expect of a physical vortex state.

\begin{figure*}

\subfloat[]{
\includegraphics[width=0.32\textwidth]{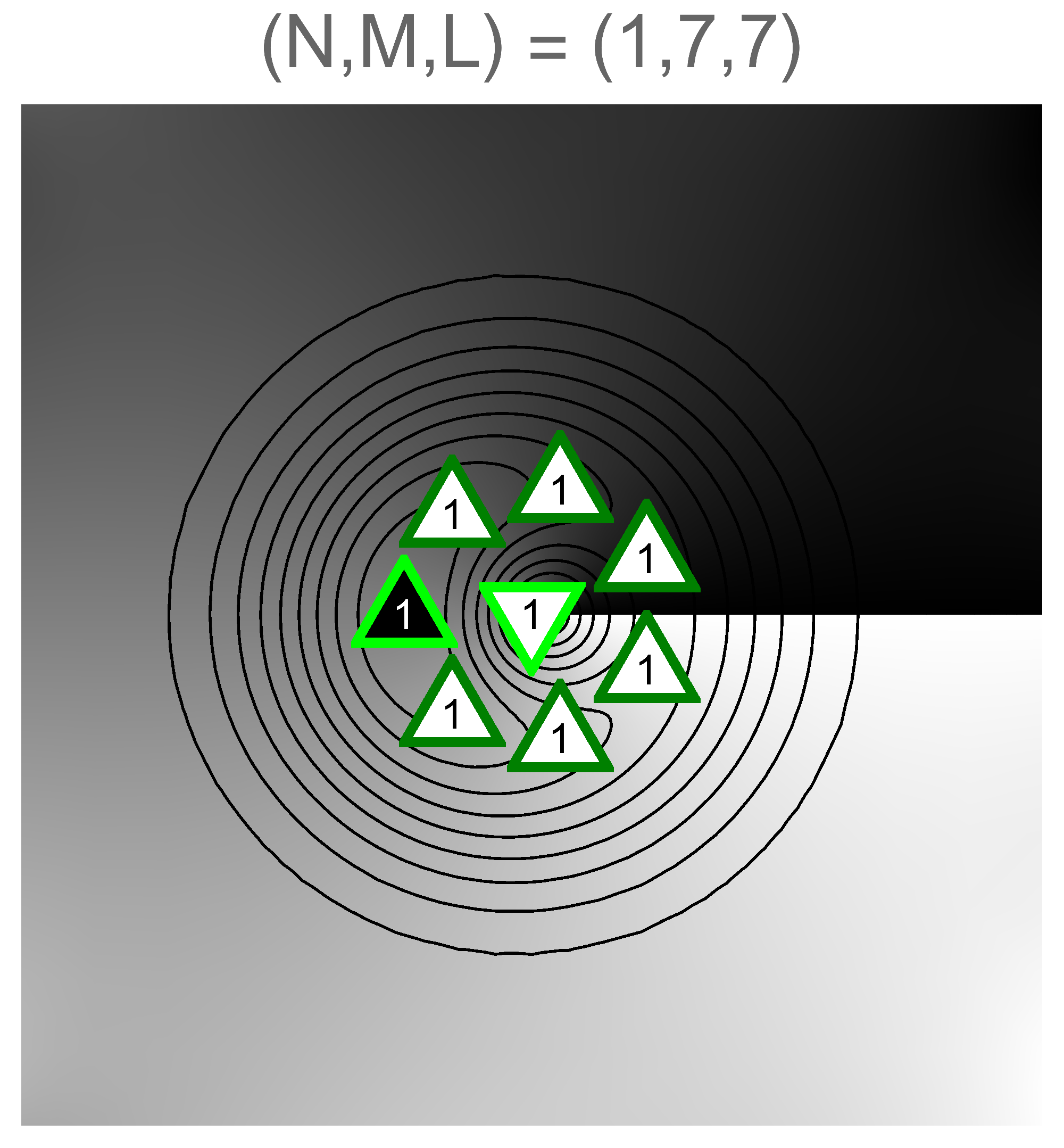}
}
\hfill
\subfloat[]{
\includegraphics[width=0.32\textwidth]{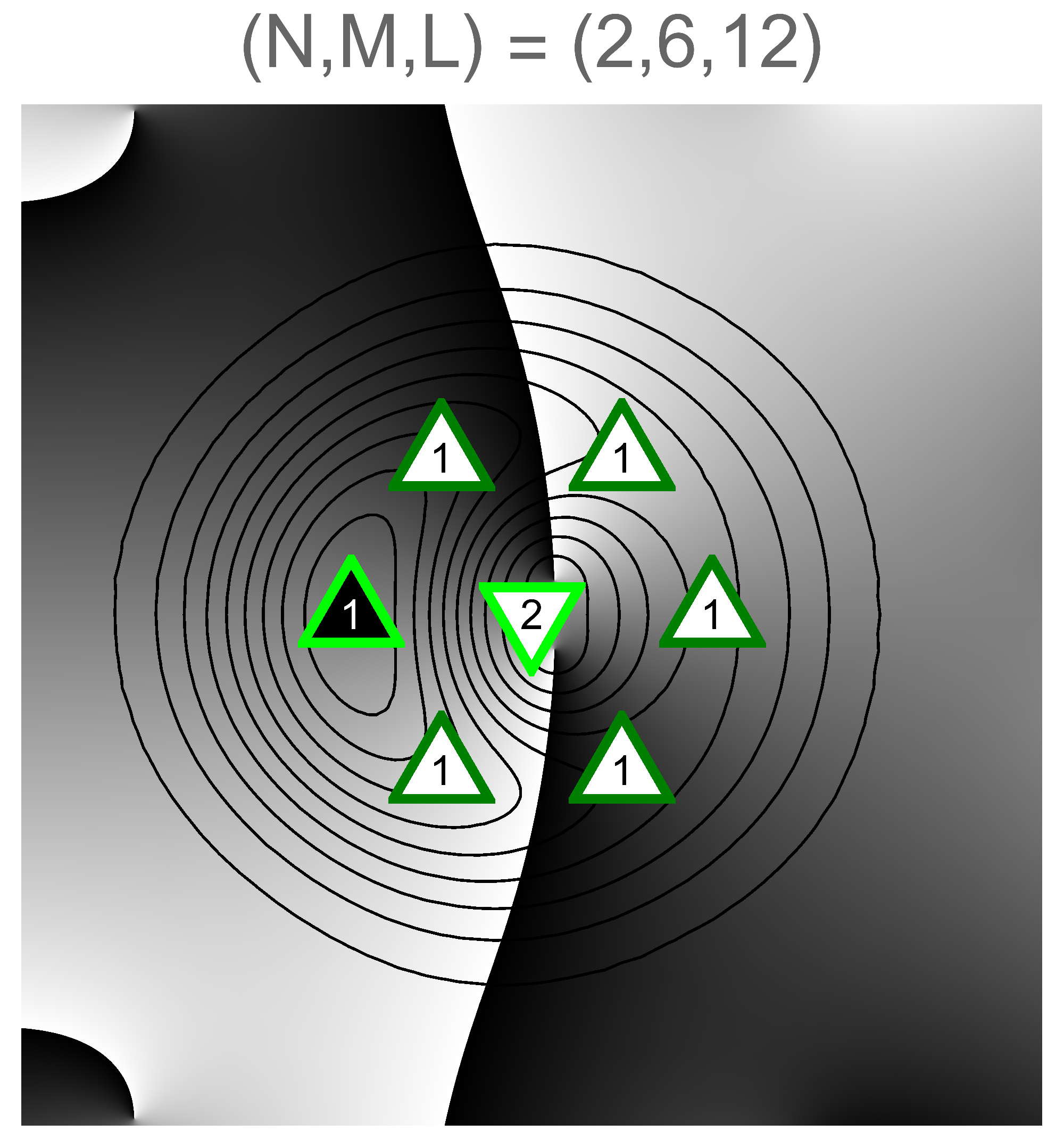}
}
\hfill
\subfloat[]{
\includegraphics[width=0.32\textwidth]{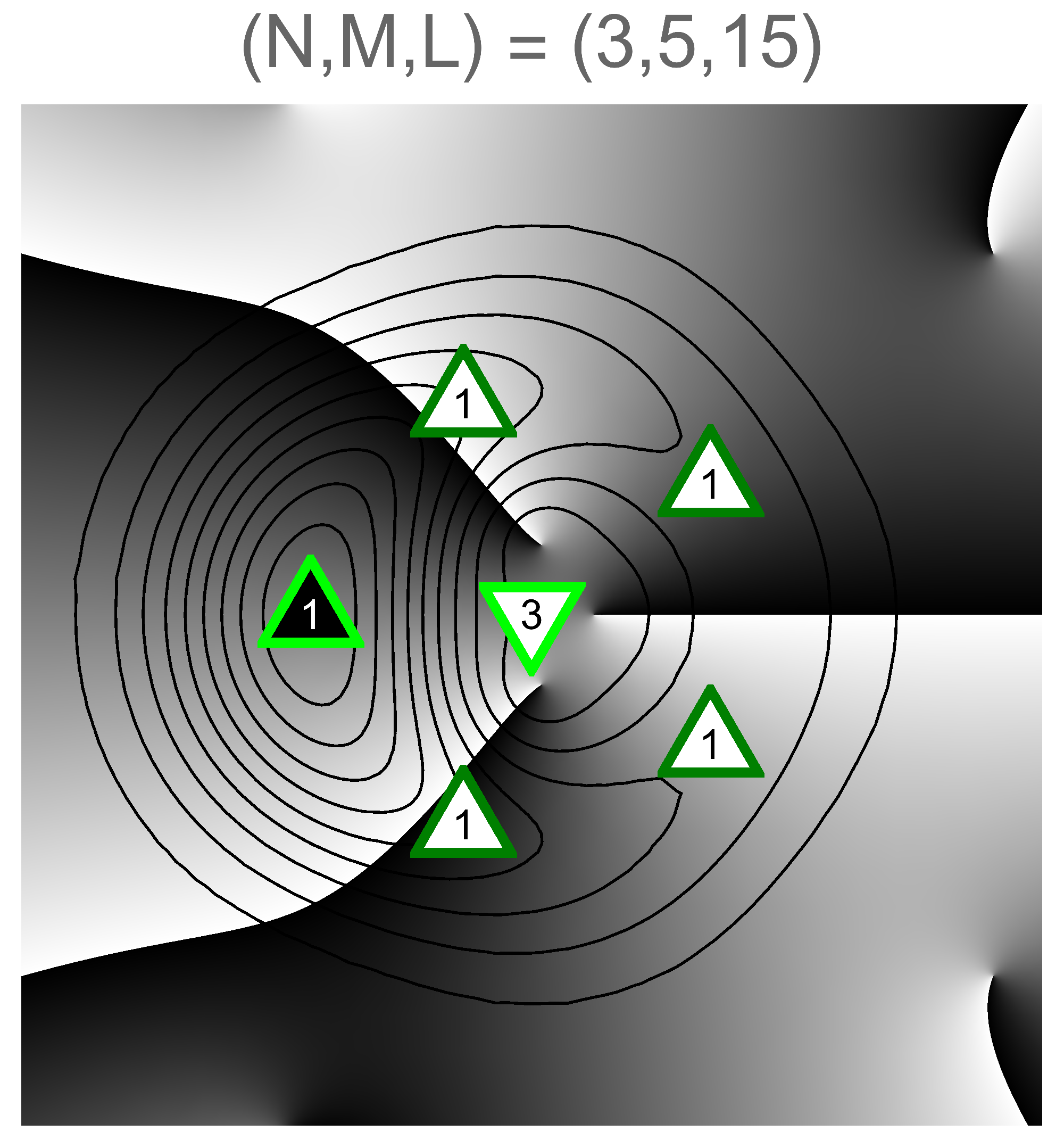}
}


\subfloat[]{
\includegraphics[width=0.32\textwidth]{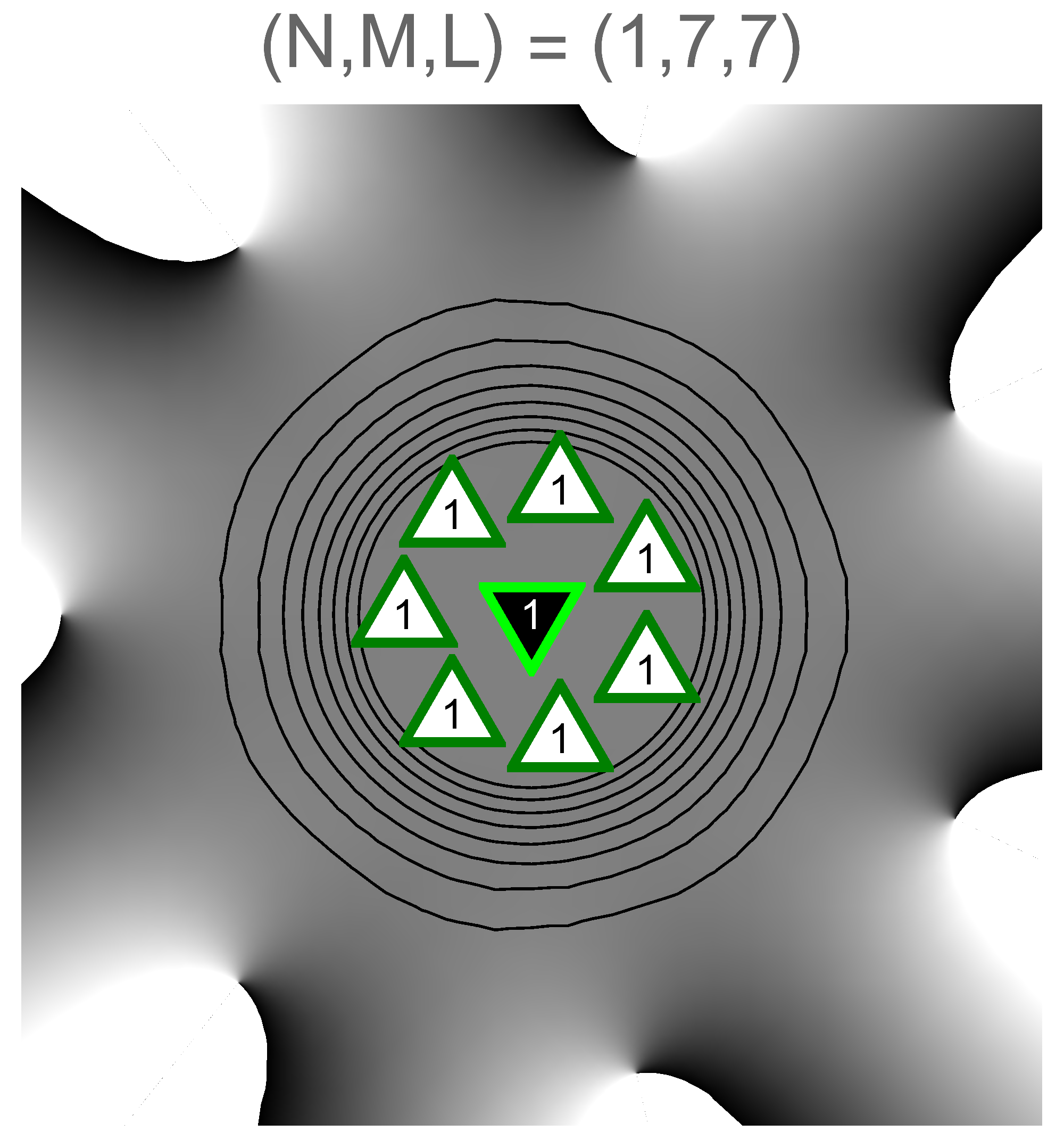}
}
\hfill
\subfloat[]{
\includegraphics[width=0.32\textwidth]{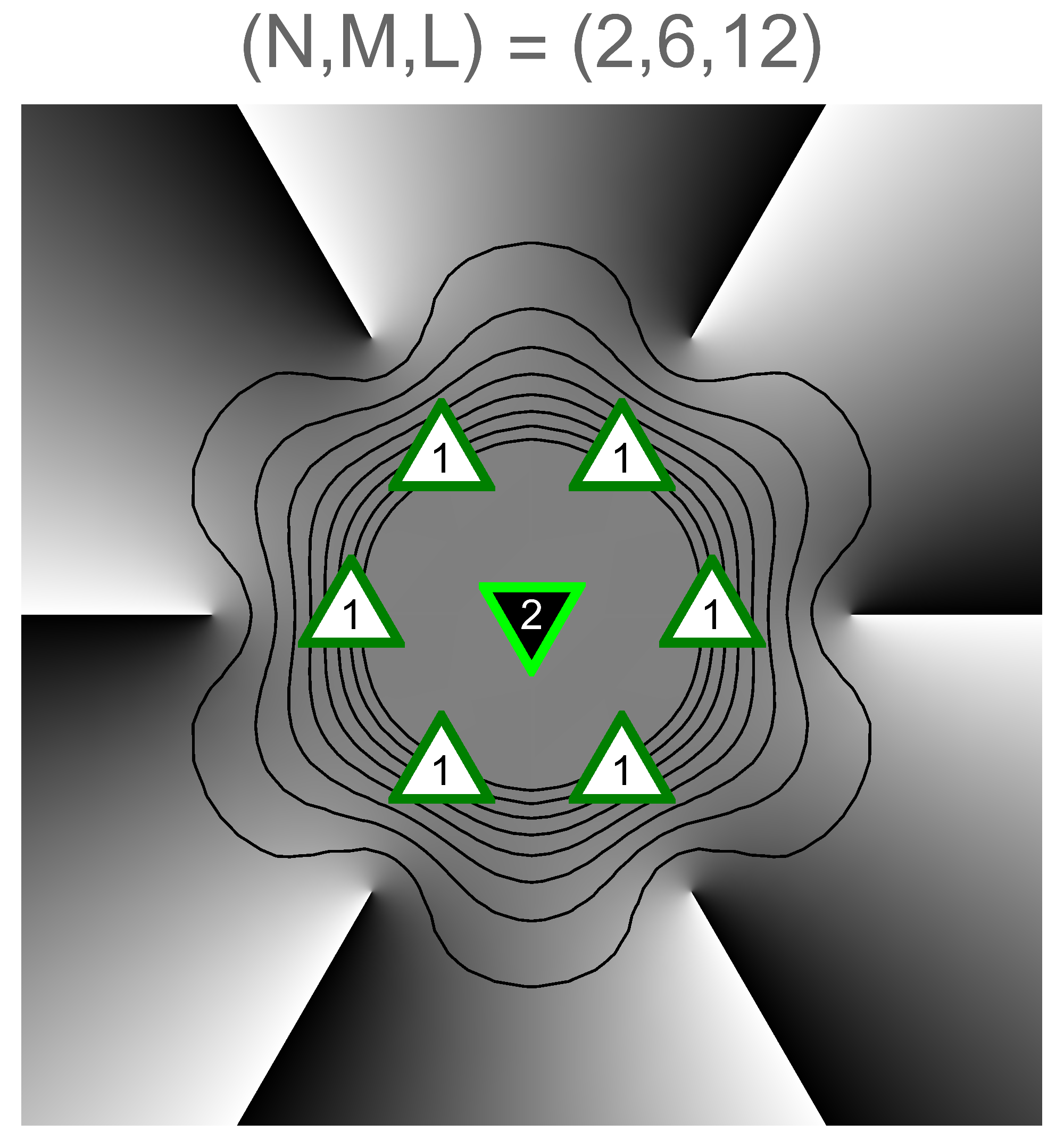}
}
\hfill
\subfloat[]{
\includegraphics[width=0.32\textwidth]{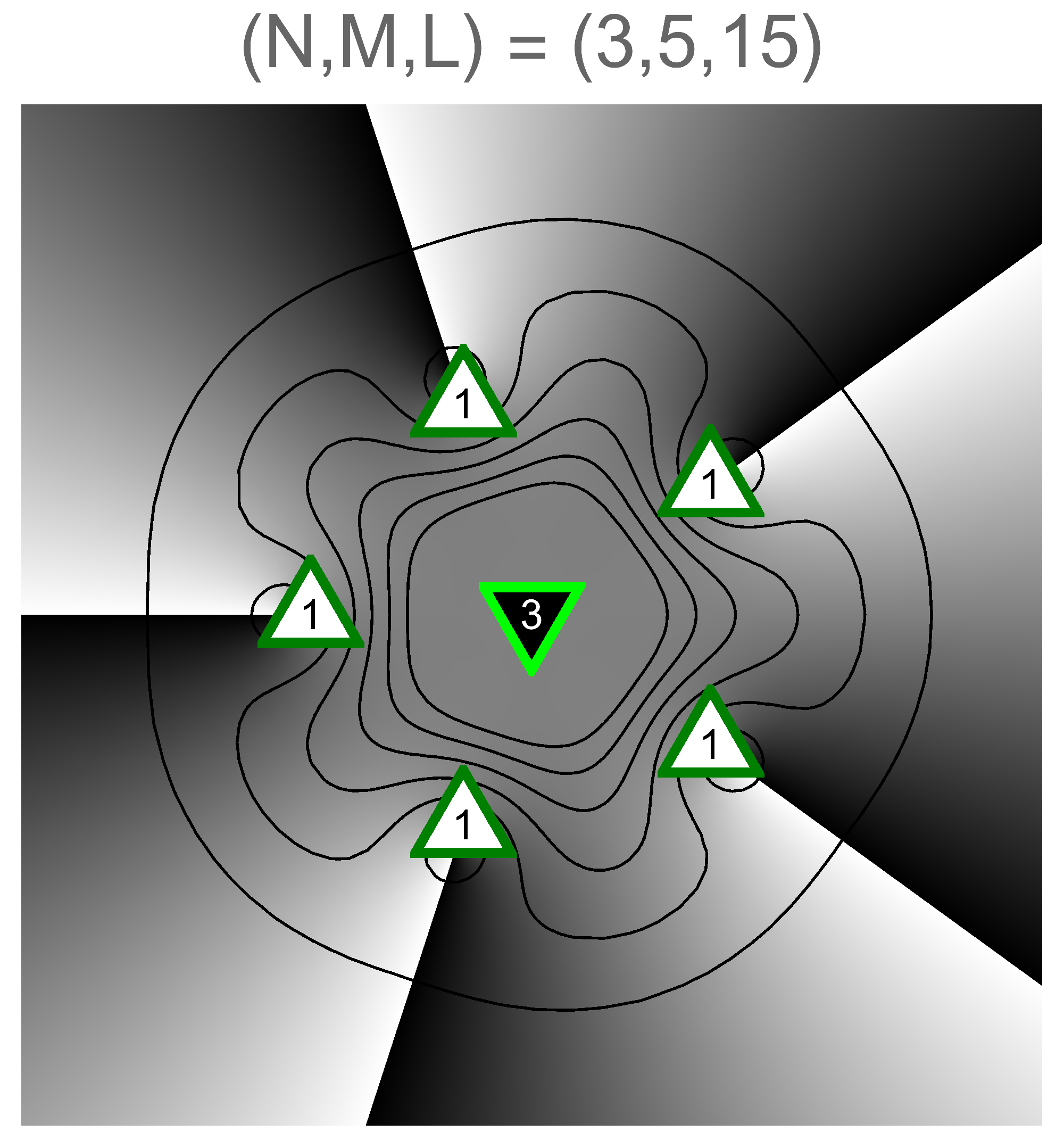}
}

\caption{Restricted wave functions for $L=M N$. (a) - (c) show $\psi_{rW}$, (d) - (f) show $\psi_{rZ}$. \label{fig:Lmax}}

\end{figure*}

Finally we discuss the maximally rotating simple states, namely the states in Eq. (\ref{eq:Lmax}). For a given $N,M$ this is the only possible simple state at $L=M N$, i.e. no CF diagonalization is necessary. The restricted wave functions of some $L=M N$ states are shown in FIG. \ref{fig:Lmax}. Again we see a very clear distinction between minority and majority species particles. The majority particles are positioned on the vertices of a regular $M$-gon, with all the minority particles in the center. In (a) - (c) we see what looks like a single, double and triple vortex structure, respectively, but filled by the minority component. The qualitative behaviour of the amplitude contours is largely the same in the three plots. In particular, \ref{fig:Lmax}(a) looks remarkably similar to FIG. \ref{fig:single-vortex} except for the minority particle in the middle. We will come back to this point in SEC. \ref{sec:overlaps}.

It can be noted that, while the single vortex for a scalar gas appears at $L=N$, the general multiply quantized vortex of winding $k$ appears at $L_k/N < k$; for 20 particles, the double and triple vortices appear at $L/N$ 1.8 and 2.85 respectively \cite{vortex-review}. The results presented here seem to indicate that the vortex of winding $N$ appears at $L = M N$, but that the vortex is a coreless vortex in the majority component.
On the other hand, (d) - (f) show how $\psi_{rZ}$ evolves from a more or less Gaussian distribution in (d) to a situation where one node radially approaches each majority particle from outside the cloud.


\section{Overlaps}
\label{sec:overlaps}

When working with trial wave functions like CF wave functions, especially in a context for which CF was not originally intended, like the slowly rotating Bose systems we are discussing, one should carefully compare the results obtained with ones obtained by other means. In the lowest Landau level, we are fortunate because the Hilbert space of each $L$ sector is \emph{finite}. Given enough computer resources, one can therefore in principle compute the exact spectrum (at least to machine precision) for a given $L$, and compare the CF results to this. In practice, desktop computers can handle two-component systems with up to a total of around 15-20 particles for the $L$ considered in this paper. This method has been used to verify the applicability of the CF construction to scalar \cite{cooper-wilkin99, korslund06} and two-component systems both below and in the quantum Hall regime \cite{grass14, jain13, meyer14}.

In particular, \cite{korslund06, viefers10} showed that the overlap between the exact and CF state for the scalar case $N=L$ (the single vortex) increases to unity in the $N \rightarrow \infty$ limit. As mentioned in the previous section, this is the state plotted in FIG. \ref{fig:single-vortex} for $M=8$. It strikingly resembles the restricted wave function plot of the $L_{\text{max}}$ state with a single minority particle, FIG. \ref{fig:Lmax}. This resemblance, and the fact that the $L_{\text{max}}$ state is unique for given $N,M$, led us to compute the overlap between the exact lowest-lying state and the CF $L_{\text{max}}$ state as function of $M$ for given $N=1,2,3$.

\begin{figure*}

\subfloat[]{
\includegraphics[width=0.32\textwidth]{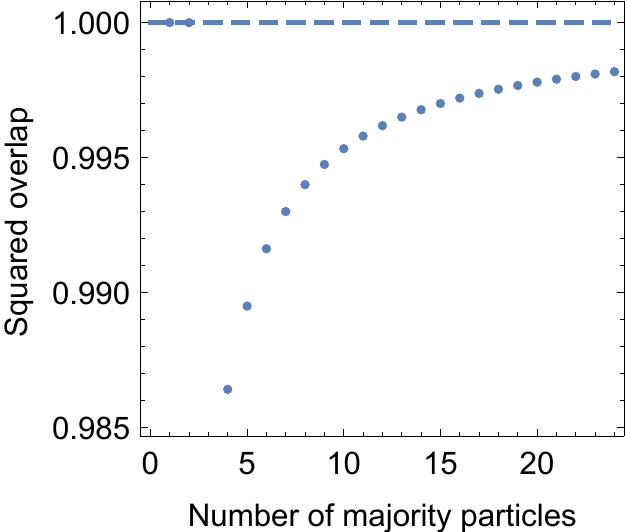}
}
\hfill
\subfloat[]{
\includegraphics[width=0.32\textwidth]{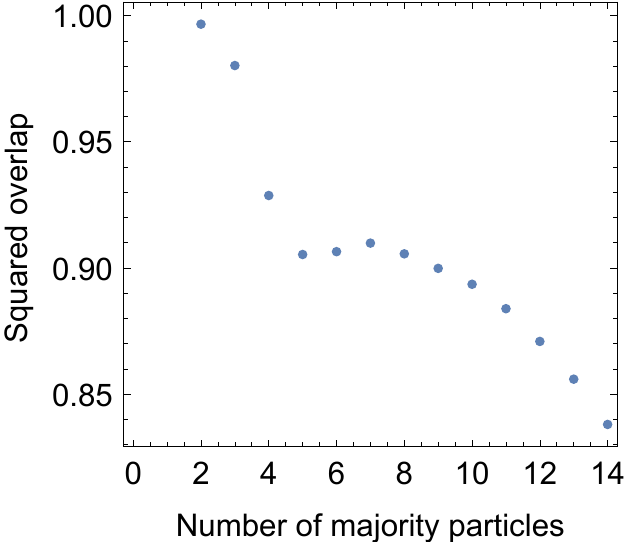}
}
\hfill
\subfloat[]{
\includegraphics[width=0.32\textwidth]{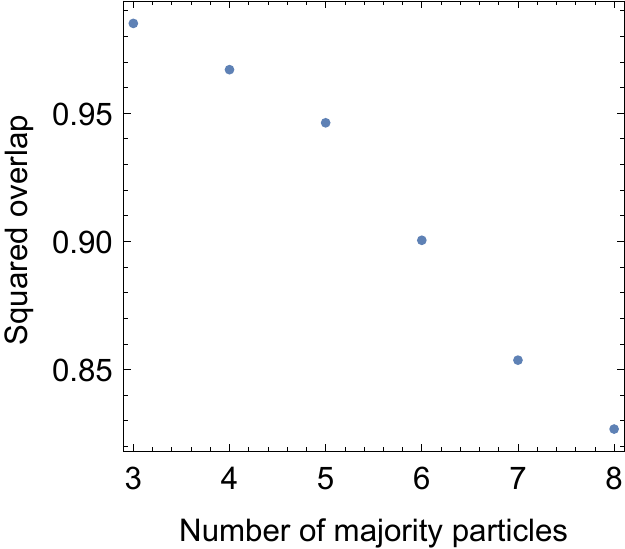}
}

\caption{The squared overlap $|\Braket{\Psi_{\text{CF}} | \Psi_{\text{exact}}}|^2$ between the numerically exact lowest-lying states, and the CF states (\ref{eq:Lmax}), at $L=M N$, for (a) $N=1$, (b) $N=2$, and (c) $N=3$, as function of $M$. \label{fig:overlaps}}

\end{figure*}

In FIG. \ref{fig:overlaps}, the squared overlaps $|\Braket{\Psi_{\text{CF}} | \Psi_{\text{exact}}}|^2$ between the maximally rotating simple CF wave functions at $L=M N$ and the exact lowest-lying states are plottet as functions of $M$ for $N = 1,2,3$ in (a) - (c) respectively. In (a), we see exactly the same convergence to unity as was reported in \cite{korslund06, viefers10} for one component. The CF candidate in the one-component case is \emph{not} simple (simple states don't exist in the scalar case), but it is unique and minimizes the CF cyclotron energy. The exact ground state of the single vortex, on the other hand, is known, and its polynomial part is simply proportional to
\be
\Psi_{\text{s.v.}} = e_M (\tilde{z})
\ee
where $\tilde{z}_i = z_i - 1/M \sum_j z_j$ are the particle coordinates relative to the center of mass, and $e_k$ is the elementary symmetric polynomial of degree $k$.
In fact, for $N=1$, the exact lowest-lying $L=M$ state is known also in the two-component case: it was given in \cite{pap-rei-kavo12} and its polynomial part is proportional to
\be
\Psi_{N=1,L=M} = \tilde{z} e_{M-1}(\tilde{w})  - M e_M(\tilde{w})
\ee
Here $\tilde{\eta}_i = \eta_i - 1/(N+M) \sum_i \eta_i$ are the particle coordinates relative to the center of mass for all particles.

In (b) and (c) however, we clearly see that the squared overlap decreases with system size, as is usually the case with most trial wave function approaches. It should be stressed that the dimension of the relevant sector of Hilbert space grows very rapidly with system size, so given that $\Psi_{L=M N}$ is a unique state in this space, it is still surprising that the overlap with the true lowest-lying state is as large as it is. We should also remember that we have restricted ourselves to simple states in this analysis: the fact that the overlap decreases with system size therefore tells us that higher bands of CF cyclotron energy $E_c$ contribute significantly for larger systems.

\section{Conclusions and outlook}
\label{sec:concl}
 
The main results of this paper are the identification of basis sets of simple CF states,
and the proof that these sets are in fact spanning the simple state subspace and are linearly independent.
We have used these basis sets to revisit the spatial structure of particles and vortices
for the rotating two-component Bose gas in the LLL at low angular momenta, and have paid special attention to the
unique simple CF candidates at angular momentum $L = M N$. We find that for $N=1$ minority particle,
the system mimics the CF candidate for the single vortex state of the scalar rotating gas. This includes an
overlap with the exact wave function that converges to 1 in the $N \rightarrow \infty$ limit. For $N>1$, we observe a coreless vortex of winding $N$ in the majority species at exactly $L = M N$. From the plots in FIG. \ref{fig:overlaps}, however, we see that the exact wave function must have contributions from CF states in higher $E_c$ bands. To answer whether or not this makes any qualitative differences from the results presented here would require further investigation.

As presented in \cite{meyer-lia16}, there are still linear dependence relations between CF candidates in higher $E_c$ bands that are not understood. However, now that we have a good understanding of the relations for simple states, it might be possible to make further progress for these so-called ``compact states'' \cite{jainbook}. They are the relevant candidates for $L > M N$ in the system studied here, but they are also relevant for electrons in strong magnetic fields, confined to quantum dots \cite{jain-kawamura95}.

For other projection methods \cite{jainkam97a} and/or geometries \cite{haldane83}, we expect some linear dependence relations similar to the ones in this paper. The reason is that, as we have seen, it is in fact the Jastrow factor that is responsible for the equations that relate different $\Lambda$-level configuration patterns. Certainly some rules will need to be modified, but the principle of translation invariance should still hold.


\section*{Acknowledgement}
\noindent
We would like to thank Susanne Viefers for helpful comments on the manuscript. This work was financially supported by the Research Council of Norway.
\


\appendix

\section{Mathematical results}


\begin{lemma}\label{Lemma1}
There exists $\mathbf{c}\in \mathbb{Q}^{|P_{N,M,\Lambda}|}$ such that 
\be{}
\Phi_Z(\mathbf{a})=\sum_{p\in P_{N,M, \Lambda}} c_p\Delta_Z(p)\Phi_Z(\bm{\alpha}),
\ee
where $\alpha_i=i-1$.  $\Lambda=\sum_{i=1}^N a_i-\alpha_i$.\\
\end{lemma}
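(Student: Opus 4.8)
The plan is to recognize the operator $\Phi_Z(\mathbf a)=\det(\partial_{z_j}^{a_i})$ as an \emph{alternant} in the commuting symbols $\partial_{z_1},\dots,\partial_{z_N}$, and $\Phi_Z(\bm\alpha)$ with $\alpha_i=i-1$ as the corresponding Vandermonde alternant $\prod_{i<j}(\partial_{z_j}-\partial_{z_i})$. First I would dispose of the trivial cases: if two entries of $\mathbf a$ coincide then $\Phi_Z(\mathbf a)=0$ and we take $\mathbf c=0$, while if $\Lambda=\sum_i(a_i-\alpha_i)<0$ there is nothing to prove. Otherwise, using the antisymmetry $\Phi_Z(\tau\cdot\mathbf a)=(-1)^{|\tau|}\Phi_Z(\mathbf a)$, I reorder the exponents so that $a_{(1)}>\cdots>a_{(N)}$ are strictly decreasing and set $\mu_i=a_{(i)}-(N-i)$. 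Because consecutive $a_{(i)}$ differ by at least one, $\mu=(\mu_1\ge\cdots\ge\mu_N\ge0)$ is a genuine partition of $\Lambda$.

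The key identity is the bialternant (Weyl) formula for Schur polynomials, which holds verbatim for the commuting operators $\partial_{z_i}$: $\det(\partial_{z_j}^{\mu_i+N-i})=s_\mu(\partial_z)\det(\partial_{z_j}^{N-i})$, i.e. $\Phi_Z(\mathbf a)=\pm\,s_\mu(\partial_z)\,\Phi_Z(\bm\alpha)$, where $s_\mu$ is the Schur polynomial and the fixed sign from row-reversal is absorbed into $\mathbf c$. On the other side, comparing $\Delta_Z(p)=\sum_{\rho\in S_N}\prod_i\partial_{z_{\rho_i}}^{p_i}$ with the monomial symmetric polynomial $m_p$ gives $\Delta_Z(p)=z'_p\,m_p(\partial_z)$, where $z'_p=\prod_j\big(\mathrm{mult}_j(p)!\big)$ counts the permutations fixing the exponent pattern; this factor is a positive integer, so $m_p(\partial_z)=\tfrac{1}{z'_p}\Delta_Z(p)$. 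I would then invoke the Kostka expansion $s_\mu=\sum_p K_{\mu p}\,m_p$, with non-negative integer coefficients $K_{\mu p}$, to obtain directly $\Phi_Z(\mathbf a)=\pm\sum_p\tfrac{K_{\mu p}}{z'_p}\,\Delta_Z(p)\,\Phi_Z(\bm\alpha)$, reading off $c_p=\pm K_{\mu p}/z'_p\in\mathbb Q$. Note that no matrix is inverted: we expand a Schur polynomial in the monomial basis, which is the easy direction, so the existence of $\mathbf c$ is automatic.

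The one point that requires genuine care — the main obstacle — is verifying that every $p$ occurring on the right actually lies in the prescribed box $P_{N,M,\Lambda}$, since otherwise the stated index set would be wrong. Here $K_{\mu p}\ne0$ forces $p\trianglelefteq\mu$ in dominance order, so $p_1\le\mu_1$ and $|p|=\Lambda$; moreover only partitions with at most $N$ parts survive, since $m_p(\partial_z)\equiv0$ when $\ell(p)>N$ in $N$ variables. It then remains to bound $\mu_1\le M$: since $a_i<N+M-1$, the largest exponent satisfies $a_{(1)}\le N+M-2$, whence $\mu_1=a_{(1)}-(N-1)\le M-1<M$, placing every $p$ inside the $M\times N$ box. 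This is exactly where the standing hypothesis $a_i<N+M-1$ enters. The extension from the special vectors to a general simple state, $\Psi=\Phi_Z(\mathbf a)\Phi_W(\mathbf b)J=\sum_p c_p\Delta_Z(p)\Phi_Z(\bm\alpha)\Phi_W(\mathbf b)J$, is then immediate, because the operator identity for $\Phi_Z(\mathbf a)$ involves only the $z$ coordinates and commutes past $\Phi_W(\mathbf b)J$.
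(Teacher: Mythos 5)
Your proposal is correct, but it reaches the conclusion by a genuinely different route than the paper. The paper's proof is an elementary triangularity argument: using the bijection $p\leftrightarrow\Phi_Z(\bm a(p))$ and the expansion $\Delta_Z(p)\Phi_Z(\bm\alpha)=\sum_{\sigma}\Phi_Z(\bm\alpha+\sum_k p_{\sigma_k}\mathbf e_k)$, it observes that the largest surviving Slater determinant in the induced ordering is $\Phi_Z(\bm a(p))$ itself, with strictly positive integer coefficient $n_{pp}$, so the transition matrix $(n_{pp'})$ is triangular with nonzero diagonal and is inverted by induction over the ordering of partitions. You instead recognize $\Phi_Z(\mathbf a)$ as an alternant, divide by the Vandermonde alternant $\Phi_Z(\bm\alpha)$ via the bialternant formula (which does hold verbatim for the commuting symbols $\partial_{z_i}$), identify $\Delta_Z(p)$ with $\bigl(\prod_j \mathrm{mult}_j(p)!\bigr)\, m_p(\partial_z)$, and read the coefficients off the Kostka expansion $s_\mu=\sum_p K_{\mu p}\, m_p$. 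Both arguments ultimately exploit the same underlying triangularity (the Kostka matrix is unitriangular with respect to dominance), but yours buys explicit, sign-definite coefficients $c_p=\pm K_{\mu p}\big/\prod_j\mathrm{mult}_j(p)!$ and makes the combinatorial content transparent, at the cost of importing standard symmetric-function machinery; the paper's version is self-contained and needs nothing beyond the ordering argument. Your checks that every contributing $p$ lands in the $M\times N$ box (via $p\trianglelefteq\mu$, $m_p=0$ for $\ell(p)>N$, and $\mu_1\le M-1$ from $a_i<N+M-1$), and your explicit handling of repeated exponents and of the reordering sign, address points the paper treats only implicitly through the bijection $p\leftrightarrow\mathbf a(p)$, and are worth having spelled out.
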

\begin{proof}
We use the one to one correspondence between the elements of $P_{N,M, \Lambda}$ and $\Phi_Z(\mathbf{a})$ given by 
\be{}
p\leftrightarrow \Phi_Z(\mathbf{a}(p)),\ \ a_i(p)=i-1+p_{N+1-i}.
\ee
and induce an ordering on the Slater determinants such that
\be{}
p<p'\Leftrightarrow \Phi_Z(\mathbf{a}(p))<\Phi_Z(\mathbf{a}(p')).
\ee
We have that 
\be{}
\Delta_Z(p)\Phi_Z(\bm{\alpha})=\sum_{\sigma\in S_n} \Phi_Z(\bm{\alpha}+\sum_{k=1}^N p_{\sigma_k}\mathbf{e}_k).
\ee
The non-zero terms are all Slater determinants.  The greatest determinant with respect to the ordering occurs when $\sigma$ orders the partition non-decreasingly.  This particular determinant is $\Phi_Z(\bm{a}(p))$ and we may therefore write

\be{}
\Delta_Z(p)\Phi_Z(\bm{\alpha})=\sum_{p'\leq p} n_{pp'}\Phi_Z(\bm{a}(p))
\ee
for some integers $n_{pp'}$.  It is important that 
\be{}
n_{pp}\neq 0.
\ee{}
It is in fact positive, since every permutation $\sigma$ that gives this determinant leaves the elements of $\bm{\alpha}+\sum_{k=1}^N p_{\sigma_k}\mathbf{e}_k$ in increasing order.  It follows that for the smallest partition, $p_{\min}=\min(P_{N,M, \Lambda})$, we have
\be{}\label{lemma2init}
\Phi_Z(\bm{a}(p_{\min}))=\frac{1}{n_{p_{\min}p_{\min}}}\Delta_Z(p)\Phi_Z(\bm{\alpha})
\ee
and in general
\be{}\label{lemma2indu}
\Phi_Z(\bm{a}(p))=\frac{1}{n_{pp}}\Delta_Z(p)\Phi_Z(\bm{\alpha})-\sum_{p'<p} \frac{n_{pp'}}{n_{pp}}\Phi_Z(\bm{a}(p)).
\ee
Eq. (\ref{lemma2init}) says that the Lemma holds for $\Phi_Z(\bm{a})=\Phi_Z(\bm{a}(p_{\min}))$ and eq. (\ref{lemma2indu}) says that it holds for $\Phi_Z(\bm{a})=\Phi_Z(\bm{a}(p))$ if it holds for all $p'<p$.  It must therefore hold for all $p$.
\end{proof}

\begin{lemma}\label{Lemma2}

For all $p \in P_{N,M, \Lambda}$, there exists $\bm{d}_p\in {\mathbb Q}^{|P_{N,\Lambda,\Lambda}|}$ such that 
\be{}\label{factorlemma}
\Delta_Z(p) =\sum_{\tilde{p}\in P_{N,\Lambda,\Lambda}}d_{p\tilde{p}}\prod_{i=1}^N \Delta_{z^{\tilde{p}_i}},
\ee
where we define $\Delta_{z^0}=N$.
\end{lemma}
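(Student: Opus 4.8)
The plan is to linearize the statement. Since the operators $\partial_{z_1},\dots,\partial_{z_N}$ commute and their distinct monomials are linearly independent as operators on polynomials, an operator identity among them is equivalent to the corresponding identity of polynomials in commuting indeterminates $x_i:=\partial_{z_i}$ over $\mathbb{Q}$. Under this dictionary the three objects in the lemma become standard symmetric polynomials: $\Delta_{z^n}=\sum_i x_i^n$ is the power sum $p_n(x)$ (with $p_0=N$ matching the convention $\Delta_{z^0}=N$), the product $\prod_{i=1}^N\Delta_{z^{\tilde p_i}}$ equals $N^{r}\,p_{\tilde p}(x)$ where $r$ is the number of zero parts of $\tilde p$ and $p_{\tilde p}$ is the power-sum product indexed by the partition $\tilde p$, and $\Delta_Z(p)=\sum_{\rho\in S_N}\prod_i x_{\rho_i}^{p_i}$ is the augmented monomial symmetric polynomial, i.e.\ a nonzero rational multiple of $m_{\lambda(p)}(x)$, where $\lambda(p)\vdash\Lambda$ is the partition whose parts are the nonzero $p_i$. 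Thus the lemma is equivalent to showing that $m_{\lambda(p)}$ lies in the $\mathbb{Q}$-span of $\{p_{\tilde p}\mid \tilde p\in P_{N,\Lambda,\Lambda}\}$, with the $N^{r}$ and augmented-monomial prefactors (all nonzero rationals) absorbed into the $d_{p\tilde p}$.

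Next I would invoke the triangularity of the transition between power sums and monomials. Expanding $p_\mu=\prod_i(\sum_k x_k^{\mu_i})$ and grouping factors according to the variable chosen shows that every monomial that appears corresponds to a \emph{coarsening} (merging of parts) of $\mu$, and merging parts moves up in dominance order. Hence $p_\mu=\sum_{\lambda\succeq\mu}R_{\mu\lambda}\,m_\lambda$, with diagonal coefficient $R_{\mu\mu}=\prod_j m_j(\mu)!>0$ counting the permutations that fix the exponent pattern. The transition matrix is therefore upper-triangular in dominance order with nonzero rational diagonal, hence invertible over $\mathbb{Q}$, so each $m_\lambda$ is a rational combination of the $p_\mu$.

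The one point needing care is the length constraint built into $P_{N,\Lambda,\Lambda}$, which is exactly the set of partitions of $\Lambda$ of length at most $N$. I would note that this set is closed \emph{upward} in dominance: $\lambda\succeq\mu$ forces $\ell(\lambda)\le\ell(\mu)$, so coarsening never increases the length. Restricting the triangular matrix $R$ to rows and columns indexed by $P_{N,\Lambda,\Lambda}$ therefore leaves it triangular with the same nonzero diagonal, so the inversion never calls on power-sum products outside $P_{N,\Lambda,\Lambda}$. Equivalently, passing to $N$ variables annihilates only the $m_\lambda$ with $\ell(\lambda)>N$, none of which can occur in $p_\mu$ for $\ell(\mu)\le N$. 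Since $\lambda(p)$ has at most $N$ parts, solving the triangular system expresses $m_{\lambda(p)}$, and hence $\Delta_Z(p)$, as $\sum_{\tilde p\in P_{N,\Lambda,\Lambda}}d_{p\tilde p}\prod_{i=1}^N\Delta_{z^{\tilde p_i}}$, which is the claim.

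I expect the main (though modest) obstacle to be precisely this bookkeeping of the finite-variable/length restriction: one must check that the triangular inversion stays inside $P_{N,\Lambda,\Lambda}$ and that none of the augmented-monomial or $\Delta_{z^0}=N$ prefactors vanish. Both follow at once from the upward-closure of $P_{N,\Lambda,\Lambda}$ under dominance and the positivity of $R_{\mu\mu}$, so once the symmetric-function dictionary is in place no genuine difficulty remains.
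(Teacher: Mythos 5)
Your proof is correct and rests on the same core fact as the paper's: expanding $\prod_{i}\Delta_{z^{p_i}}$, the terms that assign distinct variables to distinct nonzero parts reproduce a nonzero multiple of $\Delta_Z(p)$, while all remaining terms correspond to coarser partitions (fewer nonzero exponents), giving a triangular and hence invertible change of basis. The paper runs this as an induction on the number of nonzero parts with an explicit leading coefficient, whereas you package it as the classical dominance-order triangularity of the power-sum-to-monomial transition matrix together with the observation that partitions of length at most $N$ form an upward-closed set under dominance; this is a difference of presentation rather than of substance.
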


\begin{proof}
Our proof is based on induction.  We first define the subset 
\be{}
P_{N,M, \Lambda|K}=\{p\in P_{N,M, \Lambda}\ |\ p_i=0\ \Leftrightarrow\  i>K\}
\ee{}
and note that trivially, for all $p\in P_{N,\Lambda,\Lambda | 1}\supset P_{N,M, \Lambda | 1}$, there exists ${\bf{d}}_p\in\mathbb{Q}^{|P_{N,\Lambda,\Lambda}|}$
such that eq. (\ref{factorlemma}) holds.  There is only one such p and $\Delta_Z(p)=\Delta_{z^\Lambda}$.  Now, we assume as induction hypothesis that (\ref{factorlemma}) holds for all $p\in P_{N,\Lambda,\Lambda | k}$ when $k\leq K$. 

Let $p\in P_{N,\Lambda,\Lambda | K+1}$ and consider the polynomial
\be{}
\prod_{i=1}^N \Delta_{z^{p_i}}.
\ee{}
It is a product of $N$ factors. Each factor is a sum of $N$ differentiation operators with respect to $N$ different variables.  
The resulting terms with the highest number of non-zero exponents are those that do not multiply differentiation operators for the same variable from several different factors $\Delta_{z^{p_i}}$.  This is except for the last $N-K-1$ $\Delta_{z^{p_i}}$ as $\partial_{z_i}^0=\partial_{z_j}^0$ and any of the $N^{N-K-1}$ combinations gives the same.  This restricted part of $\prod_{i=1}^N \Delta_{z^{p_i}}$ can be described by permutations as
\be{}
N^{N-K-1}\sum_{\rho\in S_{K+1}}\prod_{i=1}^{K+1}\partial_{\rho_i}^{p_i}=\frac{N^{N-K-1}}{(N-K-1)!}\Delta_Z(p),
\ee{}
and this means that 
\be{}
\Delta_Z(p)=\frac{(N-K-1)!}{N^{N-K-1}}\prod_{i=1}^N \Delta_{z^{p_i}}+D
\ee{}
where $D$ is a polynomial of differentiation operators where each term has at most $K$ non-zero exponents.  By the induction hypothesis, these can all be written on the form (\ref{factorlemma}).  This implies that the hypothesis is also true for all $k\leq K+1$ and completes the proof.
\end{proof}

As was introduced in \cite{meyer-lia16}, the simple CF states obey

\begin{lemma}[Generalized translation invariance]\label{gentrans}

\be{}
(\Delta_{z^n}+\Delta_{w^n})\Psi=0 
\ee
for all integers $n>0$.  
\end{lemma}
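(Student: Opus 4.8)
The plan is to exploit the fact that, after projection, a simple state is a product of constant-coefficient differential operators acting on the Jastrow factor, $\Psi = \Phi_Z(\mathbf{a})\Phi_W(\mathbf{b})J$ with the operators kept to the left of $J$. Since $\Delta_{z^n}+\Delta_{w^n} = \sum_i \partial_{\eta_i}^n$ (the sum running over all $N+M$ coordinates) is itself a constant-coefficient operator, it commutes with both $\Phi_Z(\mathbf{a})$ and $\Phi_W(\mathbf{b})$, which are polynomials in the mutually commuting operators $\partial_{z_i}$ and $\partial_{w_j}$. First I would therefore move it to the right, past the two Slater determinants, so that $(\Delta_{z^n}+\Delta_{w^n})\Psi = \Phi_Z(\mathbf{a})\Phi_W(\mathbf{b})\bigl[(\Delta_{z^n}+\Delta_{w^n})J\bigr]$. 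The whole statement then reduces to the single identity $\sum_i \partial_{\eta_i}^n J = 0$ for every integer $n\geq 1$.

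The key step is to prove this identity by a degree-counting argument based on antisymmetry. The Jastrow factor $J = \prod_{i<j}(\eta_i-\eta_j)$ is the full Vandermonde determinant in all $N+M$ coordinates, hence totally antisymmetric, of degree $\binom{N+M}{2}$. The power-sum operator $\sum_i \partial_{\eta_i}^n$ is symmetric, i.e. it commutes with every permutation of the coordinates, so the polynomial $\sum_i \partial_{\eta_i}^n J$ is again totally antisymmetric, now of degree $\binom{N+M}{2}-n$. I would then invoke the standard fact that any totally antisymmetric polynomial is divisible by the Vandermonde and therefore, if nonzero, has degree at least $\binom{N+M}{2}$. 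For $n\geq 1$ the degree is strictly smaller, forcing $\sum_i \partial_{\eta_i}^n J = 0$ and closing the argument.

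I expect the main obstacle to be recognizing the correct reduction rather than any hard calculation. One must notice that translational invariance by itself is not enough: for instance $(\eta_1-\eta_2)^2$ is translation invariant yet is not annihilated by $\sum_i \partial_{\eta_i}^2$, so the $n=1$ relation cannot simply be bootstrapped to higher $n$, and the specific Jastrow structure, through the minimal-degree property of the Vandermonde, is essential. It is also worth flagging that the argument uses $q=1$ crucially: for $q>1$ the factor $J^q$ is still antisymmetric but has degree $q\binom{N+M}{2}$, so $\sum_i \partial_{\eta_i}^n J^q$ need not vanish, and the relation would fail in this form. Finally, the $n=1$ case of the identity recovers the ordinary translational invariance of $\Psi$, confirming that every simple state is automatically translation invariant.
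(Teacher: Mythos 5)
Your proof is correct, but its core step takes a genuinely different route from the paper's. Both arguments begin identically: since $\Delta_{z^n}+\Delta_{w^n}=\sum_i\partial_{\eta_i}^n$ is a constant-coefficient operator that commutes with the Slater determinants, everything reduces to the single identity $\sum_i\partial_{\eta_i}^n J=0$. The paper proves this identity by direct computation: it expands $J$ as a Vandermonde determinant, writes $\sum_i\partial_{\eta_i}^n J$ as a sum over pairs $(i,\rho)$ of a coordinate index and a permutation, and exhibits an explicit sign-reversing involution $(i,\rho)\mapsto(i',\rho')$ (swapping the two exponents that differ by $n$) under which the terms cancel in pairs with equal factorial coefficients and opposite signs. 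You instead argue structurally: a symmetric operator sends the totally antisymmetric polynomial $J$ to a totally antisymmetric polynomial of degree $\binom{N+M}{2}-n$, and any nonzero totally antisymmetric polynomial is divisible by the Vandermonde and hence has degree at least $\binom{N+M}{2}$, so for $n\geq 1$ the result must vanish. Your route is shorter and more conceptual, and it makes transparent why the statement is special to $q=1$: as you correctly flag, $J^q$ has degree $q\binom{N+M}{2}$, the degree obstruction disappears, and the identity genuinely fails for $q>1$ and $n\geq 2$ (though it still holds for $n=1$ by the chain rule). Your counterexample $(\eta_1-\eta_2)^2$ also usefully shows that translation invariance alone does not imply the higher-$n$ relations. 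The paper's computation, by contrast, is self-contained, identifies exactly which terms cancel, and is in the same explicit permutation-expansion spirit as the other lemmas in its appendix. Both are complete proofs.
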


Ordinary translation invariance is captured in the case $n=1$. 

\begin{proof}
The operator $(\Delta_{z^n}+\Delta_{w^n})$ commutes with the Slater determinants, so it is enough to show that $(\Delta_{z^n}+\Delta_{w^n})J=0$.  This result is independent of the splitting of particles into $Z$ and $W$ type.  We can therefore use variables $\{\eta_i\}_{i=1}^{N+M}$.  We have that  
\be{}\label{jastrowinvarians} 
\Delta_{\eta^n}J &=\sum_{i=1}^N\partial_{\eta_i}^n\sum_{\rho\in S_{N+M}} (-1)^{|\rho|} \prod_{j=1}^{N+M}\eta_{\rho_j}^{j-1}\\
&=\sum_{i=1}^N\sum_{\rho|\ \rho_i>n} (-1)^{|\rho|} \frac{(\rho_i-1)!}{(\rho_i-n-1)!}\prod_{j=1}^{N+M}\eta_{j}^{\rho_j-1-n\delta_{j,i}}.
\ee
Now, for each pair $(i,\rho)$, there is a unique pair $(i',\rho')$ such that $\rho_i-n=\rho_{i'}$ and $\rho'_{i'}-n=\rho'_i$ and $\rho_j=\rho'_j$ for all $j\neq i,i'$.  Since $\rho$ and $\rho'$ only differ by a permutation of two elements, $(-1)^{|\rho|+|\rho'|}=-1$ and the corresponding terms in eq. (\ref{jastrowinvarians}) cancel out.  Since this happens for all $(i,\rho)$, it follows that $\Delta_{\eta^n}J=0$.
\end{proof}

\begin{lemma}[Linear independence]\label{linearindependence}

\be{}
B_{Z,L}=\{\Phi_Z(\bm a(p))\Phi_W(\bm\beta)J\ |\ p\in P_{N,M,M N-L}\}
\ee
is a linearly independent set. 
\end{lemma}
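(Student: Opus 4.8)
The plan is to prove independence by projecting each member of $B_{Z,L}$ onto the component of total $w$-degree zero, where the combinatorics collapses to a one-species problem. Writing $J=\prod_{i<j}(z_i-z_j)\prod_{k<l}(w_k-w_l)\prod_{i,k}(z_i-w_k)$, the operator $\Phi_W(\bm\beta)=\det(\partial_{w_k}^{\,j-1})$ lowers the $w$-degree by $\binom{M}{2}$, so only the minimal-$w$-degree part of $J$, namely $\prod_{i<j}(z_i-z_j)\,\prod_i z_i^{M}\,\prod_{k<l}(w_k-w_l)$, can survive into $w$-degree zero. On that part $\Phi_W(\bm\beta)$ is the Vandermonde operator acting on the Vandermonde, giving the nonzero constant $\prod_{j=0}^{M-1}j!$. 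Since $\Phi_Z$ acts only on the $z_i$, the $w$-degree-zero part of $\Phi_Z(\mathbf a(p))\Phi_W(\bm\beta)J$ is that constant times $F_p(z):=\Phi_Z(\mathbf a(p))\big[\prod_i z_i^{M}\prod_{i<j}(z_i-z_j)\big]$, a symmetric polynomial in the $z_i$ alone. As projection is linear, it suffices to show that the $F_p$ are linearly independent.

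Next I would identify $F_p$ and its extremal monomial. Up to sign, $\prod_i z_i^{M}\prod_{i<j}(z_i-z_j)=\det(z_i^{\,b_k})$ with $b_k=M+k-1$, so $F_p=\det(\partial_{z_i}^{\,a_j(p)})\det(z_i^{\,b_k})$. Expanding in the monomial symmetric basis $m_\lambda$, each monomial comes from matching the orders $a_j(p)$ against the exponents $b_k$ and has the form $z^{\,b_{\pi(j)}-a_j}$. A rearrangement argument shows that the lexicographically \emph{smallest} sorted exponent vector is produced by the order-preserving matching $b_i\leftrightarrow a_i$, with exponents $b_i-a_i(p)=M-p_{N+1-i}$; these are exactly the parts of the complement $\bar p$ of $p$ inside the $M\times N$ box. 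Thus $m_{\bar p}$ is the trailing term of $F_p$, and since $p\mapsto\bar p$ is a bijection onto the partitions of $L$ fitting in the box, these trailing terms are all distinct.

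The crux, and the step I expect to be hardest, is to show that $m_{\bar p}$ really occurs in $F_p$ with a \emph{nonzero} coefficient, i.e. that the antisymmetrizations do not cancel. Writing $q_i=p_{N+1-i}$ (nondecreasing), the coefficient is proportional to $\sum_{(\sigma,\tau)}\mathrm{sgn}(\sigma\tau)$ over pairs with $b_{\tau(i)}-a_{\sigma(i)}=\bar p_i$ for all $i$. I would argue that this constraint forces the multiset $\{\tau(i)+q_i\}$ to equal $\{i+q_i\}$; comparing $\sum(\tau(i)+q_i)^2$ with $\sum(i+q_i)^2$ yields $\sum_i\tau(i)q_i=\sum_i i\,q_i$, so the rearrangement inequality forces $\tau$ to preserve the blocks on which $q$ is constant, and injectivity of $j\mapsto j+q_j$ then upgrades this to $\sigma=\tau$. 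Every surviving term therefore carries sign $\mathrm{sgn}(\sigma)^2=+1$, so the coefficient is a sum of strictly positive terms and hence nonzero. With distinct, nonvanishing trailing terms, the $F_p$ are triangular with nonvanishing diagonal against the linearly independent family $\{m_{\bar p}\}$, hence linearly independent, and this establishes the linear independence of $B_{Z,L}$.
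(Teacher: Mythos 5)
Your proposal is correct and follows essentially the same route as the paper's proof: project onto the $w$-independent part (where $\Phi_W(\bm\beta)$ acting on the minimal-$w$-degree piece of $J$ yields a nonzero constant times a one-species polynomial $F_p$), identify the extremal monomial with sorted exponents given by the complementary partition $\bar p$, and conclude by triangularity. Your explicit rearrangement-inequality argument that the coefficient of $m_{\bar p}$ is a sum of strictly positive contributions is in fact a more careful justification of the step the paper handles tersely, namely that the $\sigma=\mathrm{id}$ term cannot be cancelled by other permutations.
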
 

\begin{proof}
If
\be{}
\Psi_{p}=\Phi_Z(\bm a(p))\Phi_W(\bm\beta)J\in B_{Z,L},
\ee
then we denote the projection onto $w_i=0\ \forall w_i$ by $\bar\Psi_p$, and the set of projected states by $\bar B_{Z,L}$.  We will show that this set is linearly independent and therefore $B_{Z,L}$ is as well.

The $w$-independent terms of $\Psi_p$ arise when the $\partial_{w_i}$ operators act on the $M$ lowest order variables in the Jastrow factor.  We can therefore write

\be{}
\bar{\Psi}_p &=(\prod_{k=0}^{M} k!) \sum_{\sigma,\tau\in S_N} (-1)^{NM+|\sigma|+|\tau|}\prod_{i=1}^N\partial_{z_{\sigma_i}}^{a_i(p)}\prod_{j=1}^N z_{\tau_j}^{M-1+j}\\
&=(\prod_{k=0}^{M} k!) \sum_{\sigma,\tau\in S_N} (-1)^{NM+|\sigma|}\prod_{i=1}^N\partial_{z_{\tau_i}}^{a_{\sigma_i}(p)} z_{\tau_i}^{M-1+i}.
\ee
We are interested in the particular symmetrized term that occurs when 
$\sigma$ is the identity operator.  We name this term $t_p$, and it can be written as
\be{}
t_p=K_p\sum_{\tau\in S_N}\prod_{i=1}^N x_{\tau_i}^{M-p_i},
\ee
where $K_p$ is an integer that results from differentiation and possibly a permutation sign.  The term has the property that the smallest exponent is as great as possible among terms in $\bar\Psi_p$. Given that, the second smallest exponent is as great as possible and so on. We use this to define an ordering on the $t_p$ terms, saying that 
\be{}
t_p<t_{p'}
\ee
iff the $k$'th least exponent of $t_p$ is \emph{greater} than the $k$'th least exponent of $t_{p'}$ and their $k-1$ least exponents are pairwise equal.  This is equivalent to $p<p'$, and if we pick another non-zero term $t'_p$ of $\bar\Psi_p$ by choosing a $\sigma$ that is not the identity, then we have $t_p<t'_p$.  We index the partitions such that 
\be{}
p_1<p_2<...<p_{P_{N,M,NM-L}},
\ee
Now, if $i<j$, then $t_{p_i}<t_{p_j}<t'_{p_j}$.  This means that $\bar\Psi_i$ contains a polynomial term that is not contained in any $\bar\Psi_j$ for all $j>i$.  This means that if there exists a linear dependence relation
\be{}
c_1\bar\Psi_{p_1}+...+c_{|P_{N,M,NM-L}|}\bar\Psi_{p_{|P_{N,M,NM-L}|}}=0,
\ee  
then the leftmost non-zero coefficient of this relation must be 0, and the relation must therefore be trivial.
\end{proof}



\begin{thebibliography}{99}

\bibitem{wilkin-gunn-smith98} N.K. Wilkin, J.M.F. Gunn, and R.A. Smith, Phys.
Rev. Lett. {\bf 80}, 2265 (1998).

\bibitem{viefers-review} S. Viefers, J. Phys.: Cond. Mat. {\bf 20}, 123202 (2008).
\bibitem{cooper-review} N. Cooper, Advances in Physics {\bf 57}, 539 (2008).
\bibitem{fetter-review} A. L. Fetter, Rev. Mod. Phys. {\bf 81}, 647 (2009)

\bibitem{cornell04} V. Schweikhard, I. Coddington, P. Engels, V. P. Mogendorff, and E. A. Cornell, Phys. Rev. Lett. {\bf 92}, 040404 (2004).
\bibitem{roncaglia11} M. Roncaglia, M. Rizzi, and J. Dalibard, www.nature.com,
Scientific Reports {\bf 1}, doi:10.1038/srep00043 (2011).
\bibitem{gemelke10} N. Gemelke, E. Sarajlic, and S. Chu, arXiv:1007.2677.
\bibitem{lin09} Y.-J. Lin, R. L. Compton, K. Jim\'enez-Garc\'ia, J. V. Porto, and
I. B. Spielman, Nature {\bf 462}, 628 (2009).
\bibitem{dalibardreview} J. Dalibard, F. Gerbier, G. Juzeliunas, and P.
\"Ohberg, Rev. Mod. Phys. {\bf 83}, 1523 (2011).

\bibitem{laughlin83} R. B. Laughlin, Phys. Rev. Lett. {\bf 50}, 1395 (1983).

\bibitem{jain89} J. K. Jain, Phys. Rev. Lett. {\bf 63}, 199 (1989).

\bibitem{moore91} G. Moore and N. Read, Nucl. Phys. B {\bf 360}, 361 (1991).
\bibitem{read99} N. Read and E. H. Rezayi, Phys. Rev. B {\bf 59}, 8084 (1999).
\bibitem{ardonne-schoutens99} E. Ardonne and K. Schoutens, Phys. Rev. Lett. {\bf 82}, 5096 (1999).

\bibitem{cooper-wilkin99} N. R. Cooper and N. K. Wilkin, Phys. Rev. B {\bf 60},
R16279, (1999).
\bibitem{viefers00} S. Viefers, T. H. Hansson, and S. M. Reimann, Phys. Rev. A
{\bf 62}, 053604 (2000).

\bibitem{bloch08} I. Bloch, J. Dalibard, and W. Zwerger, Rev. Mod. Phys. {\bf 80}, 885 (2008).

\bibitem{hall98} D. S. Hall, M. R. Matthews, J. R. Ensher, C. E. Wieman, and E.
A. Cornell, Phys. Rev. Lett. {\bf 81}, 1539 (1998);
\bibitem{kasamatsu03} K. Kasamatsu, M. Tsubota, and M. Ueda, Phys. Rev. Lett. {\bf 91}, 150406 (2003).


\bibitem{kasamatsu-tsubota-ueda05} K. Kasamatsu, M. Tsubota and M. Ueda, Phys. Rev. A {\bf 71}, 043611 (2005).

\bibitem{jain13} Y.-H. Wu and J. K. Jain, Phys. Rev. B {\bf 87}, 245123 (2013).
\bibitem{grass14} T. Gra\ss, D. Ravent\'os, M. Lewenstein, and B. Juli\'a-D\'iaz, Phys.
Rev. B {\bf 89}, 045114, 2014.

\bibitem{korslund06} N. Korslund and S. Viefers, Phys. Rev. A {\bf 73}, 063602
(2006).
\bibitem{meyer14} M. L. Meyer, G. J. Sreejith, and S. Viefers, Phys. Rev. A {\bf
89}, 043625 (2014).

\bibitem{jainbook} J. K. Jain, {\it Composite Fermions}, Cambridge University
Press (2007).

\bibitem{dev-jain92} G. Dev and J. K. Jain, Phys. Rev. Lett. {\bf 69}, 2842 (1992).
\bibitem{wu95} X. G. Wu and J. K. Jain, Phys. Rev. B {\bf 51}, 1752 (1995).
\bibitem{balram13} A. C. Balram, A. W\'ojz, and J. K. Jain, Phys. Rev. B. {\bf 88}, 205312 (2013).

\bibitem{meyer-lia16} M. L. Meyer, O. Liab\o tr\o\ and\ S. Viefers, J.
Phys. A. {\bf{49}} 395201 (2016).


\bibitem{girvin-jach84} S. M. Girvin and T. Jach, Phys. Rev. B {\bf 29} 5617
(1984).

\bibitem{rwf-1} H. Saarikoski, A. Harju, M. J. Puska and R. M. Nieminen, Phys. Rev. Lett. {\bf 93} 116802 (2004).

\bibitem{rwf-2} H. Saarikoski, A. Harju, J. C. Cremon, S. Bargi, M. Manninen and S. M. Reimann, Europhysics Letters {\bf 91} 3 (2010).

\bibitem{vortex-review} H. Saarikoski, S. M. Reimann, A. Harju and M. Manninen, Rev. Mod. Phys. {\bf 82}, 2785 (2010).

\bibitem{christensson08} J. Christensson, S. Bargi, K. Karkkainen, Y. Yu, G. M. Kavoulakis, M. Manninen and S. M. Reimann, New Journal of Physics {\bf 10}, 033029 (2008).

\bibitem{bargi07} S. Bargi, J. Christensson, G. M. Kavoulakis and S. M. Reimann, Phys. Rev. Lett. {\bf 98}, 130403 (2007).

\bibitem{viefers10} S. Viefers and M. Taillefumier, J. Phys. B {\bf 43}, 155302
(2010).

\bibitem{pap-rei-kavo12} T. Papenbrock, S. M. Reimann and G. M. Kavoulakis, Phys.
Rev. Lett. {\bf 108}, 075304 (2012).

\bibitem{jain-kawamura95} J. K. Jain and T. Kawamura, Europhys. Lett. {\bf29}, 321
(1995).

\bibitem{jainkam97a} J. K. Jain and R. K. Kamilla, Phys. Rev. B {\bf 55}, R4895
(1997).

\bibitem{haldane83} F. D. M. Haldane, Phys. Rev. Lett. {\bf 51}, 605 (1983).

%
%
%
%
%



\end{thebibliography}
\end{document}